\newcommand{\cN}{\mathcal{N}}
\newcommand{\myE}{\mathbb{E}}
\newcommand{\myP}{\mathbb{P}} 
\newcommand{\real}{\mathbb{R}} 
\newcommand{\cov}{\operatorname{cov}} 
\newcommand{\var}{\operatorname{var}}
\newcommand{\cO}{{\cal O}}
\newcommand{\dint}{\operatorname{d} \!}
\newcommand{\hsp}{\hspace{0.2mm}}
\newcommand{\hspm}{\hspace{-0.4mm}}
\newcommand{\one}{\mathbbm{1}}
\newcommand{\sgn}{\operatorname{sgn}}
\newcommand{\Fbar}{\bar{F}}
\newcommand{\Gbar}{\bar{G}}
\newcommand{\Ftilde}{\widetilde{F}}
\newcommand{\Gtilde}{\widetilde{G}}
\newcommand{\Hbar}{\bar{H}}
\newcommand{\AGC}{\operatorname{AGC}}
\newcommand{\AKC}{\operatorname{AKC}}
\newcommand{\ASC}{\operatorname{ASC}}
\newcommand{\AUC}{\operatorname{AUC}}
\newcommand{\CID}{\operatorname{CID}}
\newcommand{\CMA}{\operatorname{CMA}}
\newcommand{\CPA}{\operatorname{CPA}}
\newcommand{\RCE}{\operatorname{RCE}}
\newcommand{\twogran}{\gamma^{\hsp (2)}} 
\newcommand{\threegran}{\gamma^{\hsp (3)}} 
\newcommand{\rhoG}{\bar{\rho}}  
\newcommand{\rhoS}{\rho}  
\newcommand{\tauK}{\tau}
\theoremstyle{plain}
\newtheorem{theorem}{Theorem}[section]
\newtheorem{corollary}[theorem]{Corollary}
\newtheorem{proposition}[theorem]{Proposition}
\theoremstyle{definition}
\newtheorem{example}[theorem]{Example}
\newtheorem{remark}[theorem]{Remark}
\newtheorem{scenario}[theorem]{Scenario}
\definecolor{mycolor}{RGB}{2, 158, 115}
\begin{document}

\title{Assessing Monotone Dependence: \\ Area Under the Curve Meets Rank Correlation}
\author{Eva-Maria Walz$^{1,2}$, Andreas Eberl$^2$, Tilmann Gneiting$^{1,2}$
\vspace{0.5cm} \\ 
\small $^1$Computational Statistics Group, Heidelberg Institute for Theoretical Studies, Heidelberg, Germany \\
\small $^2$Institute of Statistics, Karlsruhe Institute of Technology (KIT), Karlsruhe, Germany}

\maketitle

\begin{abstract}
The assessment of monotone dependence between random variables is a classical problem in statistics and a gamut of application domains.  Consequently, researchers have sought measures of association that are invariant under strictly increasing transformations of the margins, with the extant literature being splintered.  For continuous variables, symmetric rank correlation coefficients, such as Spearman's Rho and Kendall's Tau, have been studied at great length in the statistical literature.  For dichotomous outcomes, the asymmetric area under the curve ($\AUC$) measure is used to assess monotone dependence.  We unify and complete thus far disconnected strands of literature, by establishing common population level theory, common estimators, and common tests that bridge continuous and dichotomous settings and apply to all linearly ordered outcomes. 

Originating in the biomedical literature, the C index provides a bridge between $\AUC$, to which it reduces for a dichotomous outcome, and Kendall's Tau, to which it relates linearly under continuity.  To establish the same kind of bridge between $\AUC$ and Spearman's Rho, we introduce asymmetric grade correlation, $\AGC(X,Y)$, as the covariance of the mid distribution function transforms, or grades, of $X$ and $Y$, divided by the variance of the grade of $Y$.  The coefficient of monotone association then is $\CMA(X,Y) = \frac{1}{2} (\AGC(X,Y) + 1)$.  The $\CMA$ measure has range $[0,1]$, and the perfect monotone predictor property holds: $\CMA(X,Y) = 1$ if, and only if, there is a nondecreasing function $m$ such that $Y = m(X)$ almost surely.  Crucially, $\CMA$ provides a bridge between $\AUC$, to which it reduces for a dichotomous outcome, and Spearman's Rho, to which it relates linearly under continuity.  Based on recent results by \citet{Pohle2025a}, we establish central limit theorems for the sample versions of these measures, and we develop tests of \citet{DeLong1988} type for their equality with a shared outcome $Y$.  In case studies, we assess progress in data-driven weather prediction and evaluate methods of uncertainty quantification for large language models.

\medskip
\textit{Keywords}: area under the curve, grade correlation, Kendall's Tau, concordance index, coefficient of monotone association, DeLong test, Somers' $D$, Spearman's Rho
\end{abstract}

\section{Introduction}  \label{sec:introduction}

There is a vast extant literature on quantitative measures of dependence between real-valued random variables $X$ and $Y$ \citep{Embrechts2002}.  Generally, one distinguishes measures of complete dependence \citep[e.g.,][]{Renyi1959, Neslehova2007, Szekely2007, Reshef2011} from measures of concordance or monotone dependence \citep[e.g.,][]{Schweizer1981, Scarsini1984}.  Typically, such measures are symmetric in the sense that their value remains the same if we interchange the roles of $X$ and $Y$.  Despite the focus on symmetric measures of general or monotone dependence in the statistical literature, we follow \citet{Chatterjee2021} in this paper and argue that asymmetric measures, which honor the specific roles of $X$ as a covariate, feature, or marker, and of $Y$ as the outcome of interest, have distinct benefits.  A key argument in favor of asymmetric measures is that, while independence is a symmetric property of $X$ and $Y$, the classical notions of complete dependence \citep{Lancaster1963} and perfect monotone dependence \citep{Kimeldorf1978} fail to be symmetric in discrete settings.

Specifically, consider the pair $(X,Y)$ of real-valued random variables with joint distribution $\myP$.  We generally assume that neither the covariate $X$ nor the outcome $Y$ is almost surely constant and refer to this setting as nondegenerate.  Let $F(x) = \myP(X \leq x$) denote the cumulative distribution function (CDF) of $X$, and let $G$ denote the CDF of $Y$.  When $F$ and $G$ are continuous, Spearman's rank correlation coefficient \citep{Spearman1904} is defined as 
\begin{align}  \label{eq:rhoS}
\rhoS(X,Y) = 12 \hsp \cov(F(X),G(Y)),  
\end{align}
which equals the Pearson product moment correlation between the probability integral transforms $F(X)$ and $G(Y)$.  Kendall's rank correlation coefficient \citep{Kendall1938} is defined as 
\begin{align}  \label{eq:tau}
\tauK(X,Y) = \myE \left( \sgn(X' - X'') \sgn(Y' - Y'') \right) \! ,
\end{align}
where here and in the remainder of the paper $(X,Y)$, $(X',Y')$, and $(X'',Y'')$ have distribution $\myP$ and are independent.  The classical rank correlation coefficients are symmetric, and we often write
\begin{align*}
\rhoS = \rhoS(X,Y) = \rhoS(Y,X) \qquad \textrm{and} \qquad \tauK = \tauK(X,Y) = \tauK(Y,X), 
\end{align*}
respectively.  Furthermore, Spearman's Rho and Kendall's Tau are measures of monotone dependence in the sense that they are invariant under strictly increasing transformations of the margins, and thus they depend on only the copula when $X$ and $Y$ are continuous.\footnote{When $X$ and $Y$ are both continuous, the copula of the bivariate distribution $\myP$ is the bivariate distribution of $(F(X), G(Y))$.  As $F$ is strictly increasing on the support of $X$, and $G$ is strictly increasing on the support of $Y$, it follows that a measure of monotone dependence can be expressed in terms of the copula only \citep{Schweizer1981}.}  These and other symmetric measures of monotone dependence can readily be applied to arbitrary distributions, without requiring continuity.  However, generalizations to arbitrary distributions are subject to an attainability problem, as for noncontinuous distributions the maximal value of the measure may not be reached even though a perfect monotone relationship holds \citep{Neslehova2007, Mesfioui2022}.  We demonstrate in the subsequent section that one needs to resort to asymmetric measures in order to resolve these issues.

The most prominent and most pronounced case of discreteness arises under a dichotomous or binary outcome $Y$.  We adopt a common convention in the dichotomous case and refer to $Y = 1$ as a positive outcome, and to $Y = 0$ as a negative outcome.  In this scenario, the receiver operating characteristic (ROC) curve and the area under the ROC curve ($\AUC$) measure \citep{Swets1988, Bradley1997, Fawcett2006} are widely used tools for the assessment of monotone dependence of $Y$ on a real-valued covariate, feature, or marker $X$.  Specifically, if $Y = Y_d$ is a nondegenerate dichotomous outcome, let $F_1$ and $F_0$ denote the cumulative distribution function (CDF) of $X$ conditional on $Y_d = 1$ and $Y_d = 0$, respectively.  The ROC curve is a plot of the hit rate $1 - F_1(x)$ versus the false alarm rate $1 - F_0(x)$ as the decision threshold $x$ varies.  It is well known that the area under the ROC curve ($\AUC$) equals 
\begin{align}  \label{eq:AUC} 
\AUC(X,Y_d) = \myE \left( s(X',X'') \mid Y_d' = 0, Y_d'' = 1 \right) \! ,
\end{align}
where
\begin{align}  \label{eq:s}
s(x',x'') = \one \{ x' < x'' \} + \frac{1}{2} \one \{ x' = x'' \} 
\end{align}
for $x', x'' \in \real$; for technical details see, e.g., \citet{Gneiting2022a}.  Traditionally, ROC analysis and the $\AUC$ measure have been limited to dichotomous outcomes.  However, real-valued variables are ubiquitous, and researchers have felt compelled to dichotomize continuous outcomes so that these popular tools can be applied, with substantial adverse effects on analyses \citep{Altman2006, Huang2024}.  

Our goal in this paper is to show that thus far disconnected strands of literature, with disparate theories and methodologies for the continuous case and dichotomous outcomes, respectively, can be unified.  Specifically, we establish common population level theory, common estimators, and common tests that bridge continuous and dichotomous settings and generalize to all linearly ordered outcomes.  To describe the first such bridge, consider the C index ($\CID$) of $Y$ on $X$ as introduced by \citet{Harrell1996} and studied by \citet{Pencina2004}, 
\begin{align*}
\CID(X,Y) = \myE \left( s(X',X'') \mid Y' < Y'' \right) ,
\end{align*}
which is the tie-adjusted probability of concordance.  Evidently, if $Y = Y_d$ is dichotomous then $\CID(X,Y_d) = \AUC(X,Y_d)$, whereas if $X = X_c$ and $Y = Y_c$ are continuous then $\CID(X_c,Y_c) = \frac{1}{2} (\tauK + 1)$ relates linearly to Kendall's Tau.  

To establish the same kind of bridge between $\AUC$ and Spearman's Rho, let $\Fbar(x) = \myP(X < x) + \frac{1}{2} \, \myP( X = x)$ denote the mid distribution function (MDF) of $X$.  Similarly, let $\Gbar$ denote the MDF of $Y$.  We define the \textit{asymmetric grade correlation}~($\AGC$) of $Y$ on $X$ as 
\begin{align*}
\AGC(X,Y) = \frac{\cov(\Fbar(X), \Gbar(Y))}{\cov(\Gbar(Y), \Gbar(Y))},
\end{align*}
and the \textit{coefficient of monotone association}~($\CMA$) of $Y$ on $X$ as 
\begin{align*}
\CMA(X,Y) = \frac{1}{2} \left( \AGC(X,Y) + 1 \right) \! ,  
\end{align*}
respectively.  Perhaps surprisingly, despite the asymmetry in the defining relation, $\AGC(X,Y) \in [-1,1]$ and $\CMA(X,Y) \in [0,1]$ are normalized.  Furthermore, the upper bound is attained if, and only if, there is a nondecreasing function $m$ such that $Y = m(X)$ almost surely.  When $Y = Y_d$ is dichotomous, then $\CMA(X,Y_d) = \AUC(X,Y_d)$, whereas if $X_c$ and $Y_c$ are continuous then $\CMA(X_c,Y_c) = \frac{1}{2} (\rhoS + 1)$ relates linearly to Spearman's Rho.  In this sense, $\CMA$ bridges Spearman's Rho and $\AUC$ in the very same way that $\CID$ bridges Kendall's Tau and $\AUC$.

The remainder of the paper is organized as follows.  The subsequent Section~\ref{sec:population} elaborates the, in part surprising, properties of the C index and the $\CMA$ measure at the population level.  Section~\ref{sec:sample} turns to empirical settings, where we study plug-in sample versions of the measures and draw on the recent work of \citet{Pohle2025a} to devise tools for inference that cover the full range from dichotomous to continuous outcomes.  In particular, we establish central limit theorems for the sample versions of the measures, and we develop general tests of \citet{DeLong1988} type for the comparison of monotone association with a shared outcome, which nest classic tests for the equality of $\AUC$, Spearman's Rho, and Kendall's Tau, respectively.  Section~\ref{sec:case_studies} presents case studies on revolutionary developments in the recent transition from physics-based to data-driven weather prediction, and on the assessment of uncertainty statements for large language models.  The paper closes with a discussion in Section~\ref{sec:discussion}.  Proofs and technical details are deferred to appendices.

\section{Bridging the area under the curve (AUC) measure to rank correlation coefficients: Population setting}  \label{sec:population}

We now provide details for the two bridges between the classical rank correlation coefficients and the area under the curve ($\AUC$) measure.  In a nutshell, the C index ($\CID$) is normalized to the unit interval $[0,1]$, reduces to $\AUC$ when the outcome is dichotomous, and specializes to a rescaled version of Kendall's Tau when the variables are continuous.  We then introduce the coefficient of monotone association ($\CMA$), which also is normalized to the unit interval $[0,1]$, reduces to $\AUC$ when the outcome is dichotomous, and specializes to a rescaled version of Spearman's Rho when the variables are continuous.  We operate in the population setting, where the bivariate random vector $(X,Y)$ has distribution $\myP$, and we assume that $X$ and $Y$ are nondegenerate, in the sense that neither of them is almost surely constant.

\subsection{The C index bridges Kendall's Tau and AUC}  \label{sec:CID}
 
The \textit{C index}\/ or \textit{concordance index}\/ of $X$ on $Y$ is defined as
\begin{align}  \label{eq:CID}
\CID(X,Y) = \myE \left( s(X',X'') \mid Y' < Y'' \right) \! ,
\end{align}
where the function $s(x',x'') = \one \{ x' < x'' \} + \frac{1}{2} \one \{ x' = x'' \}$ is specified at \eqref{eq:s}.  Of course, $\CID(X,Y)$ is the probability of concordance between $X$ and $Y$, with a natural adjustment for ties in $X$, which renders the measure asymmetric.  In the nondegenerate case both $\CID(X,Y)$ and $\CID(Y,X)$ are well defined.  The C index was developed by \citet{Harrell1996} and \citet{Pencina2004} and has traditionally been applied to censored outcomes in survival analysis.  Here, we ignore the various complications that arise under censoring \citep{Hartman2023, Sierra2025, Lillelund2026}\footnote{Under censoring one does not observe the outcome $Y$ of interest, but instead a censored variable $C = \min(Y,Z)$ along with an indicator $\delta = \one \{ Y = C \}$.  The challenge is that one needs to estimate $\CID(X,Y)$ based on the triple $(X, C, \delta)$.  While the handling of censored outcomes and/or missing data is of great importance, we here restrict attention to the nested setting where the outcome $Y$ of interest is noncensored and fully observed.} and assume that the outcome $Y$ is fully observed.

For further analysis here and in subsequent sections, we consider the notion of granularity.  Specifically, for a random variable $Y$, the 2-granularity $\twogran(Y)$ and the 3-granularity $\threegran(Y)$ are defined as 
\begin{align}  \label{eq:granularity} 
\twogran(Y) = \myP \left( Y = Y' \right) \quad \textrm{and} \quad \threegran(Y) = \myP \left( Y = Y' = Y'' \right) \! ,
\end{align}
respectively.  Evidently, $0 \leq \threegran(Y) \leq \twogran(Y) \leq 1$.  The minimal granularity arises under the finest possible resolution, namely, for continuous distributions, and the maximum is reached for degenerate distributions with all mass in a single atom. 

In the nondegenerate case, we define the \textit{asymmetric Kendall correlation}~($\AKC$) of $Y$ on $X$ as 
\begin{align}  
\AKC(X,Y) & = \frac{\tauK(X,Y)}{\tauK(Y,Y)} \label{eq:AKC} \\
          & = \frac{\tauK}{1 - \twogran(Y)} \label{eq:AKC_granularity} 
\end{align}
in terms of Kendall's Tau at \eqref{eq:tau}.  This measure has been studied under the label of Somers' $D$ in the social science literature \citep{Somers1962, Newson2002}.  We readily see that
the C index, 
\begin{align}  \label{eq:CID_AKC} 
\CID(X,Y) = \frac{1}{2} \left( \AKC(X,Y) + 1 \right) ,
\end{align}
relates linearly to $\AKC(X,Y)$.  When $Y = Y_d$ is dichotomous then 
\begin{align}  \label{eq:CID_dichotomous}
\CID(X,Y_d) = \AUC(X,Y_d)
\end{align}
reduces to the $\AUC$ measure at \eqref{eq:AUC}.  On the other hand, if $X = X_c$ and $Y = Y_c$ are continuous then $\AKC(X_c,Y_c) = \AKC(Y_c,X_c) = \tauK$ and
\begin{align}  \label{eq:CID_continuous}
\CID(X_c,Y_c) = \CID(Y_c,X_c) = \frac{1}{2} \left( \hsp \tauK + 1 \right) .
\end{align}
Summarizing the relationships at \eqref{eq:CID_continuous} and \eqref{eq:CID_dichotomous}, the $\CID$ measure bridges Kendall's Tau, to which it relates linearly in the continuous case, and $\AUC$, to which it reduces when the outcome is dichotomous.  The subsequent simple example serves to illustrate the adaptation of $\AKC(X,Y)$ and $\CID(X,Y)$ to the granularity of $X$ and $Y$. 

\begin{example}  \label{ex:granularity_akc}
Let $X$ be uniform on $(0,1]$, let the function $m$ be strictly increasing, and let the outcome
\begin{align*} 
Y_k = \sum_{i=1}^k \one \! \left\{ X \in \left( \frac{i-1}{k}, \frac{i}{k} \right] \right\}  m \! \left( \frac{i}{k} \right)
\end{align*}
be discrete, where $k \geq 2$ is an integer.  Then $\tauK(X,Y_k) = \tauK(Y_k,X) = 1 - \frac{1}{k}$, whereas $\twogran(X) = 0$ and $\twogran(Y_k) = \frac{1}{k}$.  Therefore,
\begin{align*}
\AKC(X,Y_k) = 1 \quad \textrm{and} \quad \AKC(Y_k,X) = 1 - \frac{1}{k},
\end{align*}
respectively.  The value of $\AKC(X,Y_k) = 1$ reflects the fact that the continuous covariate $X$ is a perfect monotone predictor of the discrete outcome $Y_k$.  Likewise, the value of $\AKC(Y_k,X) = 1 - \frac{1}{k}$ quantifies the suboptimality of $Y_k$ as a predictor of $X$.  In the limit as $k \to \infty$ the 2-granularities of $X$ and $Y_k$ become equal and the difference between $\AKC(X,Y_k)$ and $\AKC(Y_k,X)$ vanishes.  Evidently, analogous relations hold for $\CMA(X,Y_k)$ and $\CMA(Y_k,X)$.
\end{example}

There is an expansive literature on desirable properties for measure of complete dependence \citep{Renyi1959} and monotone dependence \citep{Schweizer1981, Scarsini1984}.  We now summarize key properties of $\AKC$ that carry over to the $\CID$ measure in obvious ways. 

\begin{theorem}  \label{thm:AKC_properties}
Let\/ $(X,Y)$ be nondegenerate. 
\begin{itemize}
\item[(a)] Range: $- 1 \leq \AKC(X,Y) \leq 1$
\item[(b)] Symmetry under equal 2-granularity: $\AKC(X,Y) = \AKC(Y,X)$ if, and only if, $\twogran(X) = \twogran(Y)$
\item[(c)] Change of sign: $\AKC(-X,Y) = \AKC(X,-Y) = - \AKC(X,Y)$
\item[(d)] Independence: If\/ $X$ and\/ $Y$ are independent, then\/ $\AKC(X,Y) = 0$.
\item[(e)] Perfect monotone predictor property: $\AKC(X,Y) = 1$ if, and only if, there exists a nondecreasing function\/ $m$ such that\/ $Y = m(X)$ almost surely.
\item[(f)] Invariance under strictly increasing transformations: If\/ $f$ and\/ $g$ are strictly increasing, then\/ $\AKC( \hsp f(X),g(Y)) = \AKC(X,Y)$. 
\end{itemize}
\end{theorem}

While (a), (c), (d), and (f) are well established properties and concepts, we note the modulation of symmetry in terms of 2-granularity in property (b).  The perfect monotone predictor property (e) states that $\AKC(X,Y) = 1$ if, and only if, $X$ is a perfect monotone predictor of $Y$ in the sense of \citet{Kimeldorf1978}.  Therefore, $\AKC$ and $\CID$ overcome the attainability problem.  Furthermore, $\AKC$ and $\CID$ are particularly relevant when one seeks to assess the extent to which an explanatory variable $X$ might contribute to the prediction of an outcome $Y$.  The perfect monotone predictor property is violated by Kendall's Tau and Spearman's Rho, and indeed it is incompatible with the classical notion of symmetry for a measure of association.  The simple reason is that, if $X$ is a perfect monotone predictor of $Y$, then $Y$ may or may not be a perfect monotone predictor of $X$, as elucidated in the following comment.

\begin{remark}[incompatibility of symmetry and perfect monotone predictor property]  \label{re:incompatibility}
If $X$ and $Y$ are both continuous, then $X$ is a perfect monotone predictor of $Y$ if, and only if, $Y$ is a perfect monotone predictor of $X$ \citep[p.~897]{Kimeldorf1978}.  However, if we drop the assumption of continuity, counterexamples to the equivalence arise.  For instance, for the pair $(X,Y_k)$ from Example \ref{ex:granularity_akc} the continuous variate $X$ is a perfect monotone predictor of the discrete outcome $Y_k$, but $Y_k$ fails to be a perfect monotone predictor of $X$.  Consequently, it is not possible to find a symmetric measure of association with the perfect monotone predictor property. 
\end{remark}

Interestingly, the perfect monotone predictor property is compatible with the modulated symmetry condition (b).  Furthermore, jointly with the change of sign property (d), the perfect monotone predictor property implies that $\AKC(X,Y) = - 1$ if, and only if, there is a nonincreasing function $m$ such that $Y = m(X)$ almost surely.  Evidently, analogous statements hold for $\CID(X,Y)$.

\subsection{The coefficient of monotone association (CMA) measure bridges Spearman's Rho and AUC}  \label{sec:AGC}

We now aim to bridge Spearman's Rho and $\AUC$ in the same way that the C index bridges Kendall's Tau and $\AUC$.  Specifically, we introduce the coefficient of monotone association ($\CMA$), which is normalized to the unit interval $[0,1]$, reduces to $\AUC$ when the outcome is dichotomous, and specializes to a rescaled version of Spearman's Rho when the variables are continuous.  To this end, it is tempting to follow the route in the previous section and depart from an asymmetric version of Spearman's Rho.\footnote{Recall that $\AKC(X,Y) = \tauK(X,Y)/\tau(Y,Y)$ and $\CID(X,Y) = \frac{1}{2} (\AKC(X,Y) + 1)$.  In this light, it may be tempting to define an asymmetric Spearman's rank correlation coefficient as $\ASC(X,Y) = \rhoS(X,Y)/\rhoS(Y,Y)$ and rescale to $\frac{1}{2} (\ASC(X,Y) + 1)$.  However, we have not been able to prove an analogue of the crucial Theorem \ref{thm:CMA2s} in this alternate setting.}  However, such an approach is subject to the technical challenge of analyzing the standard version of Spearman's Rho in discrete settings \citep{Mesfioui2022}.  We address this issue by using an alternate version of Spearman's Rho, in accordance with the classical paper by \citet[pp.~317--319]{Hoeffding1948} and Definition 2.3 in the recent work of \citet{Pohle2025a}.

Specifically, let $F(x) = \myP(X \leq x)$ be the cumulative distribution function (CDF) of the random variable $X$, which we typically interpret as covariate, feature, or marker, and let $G$ denote the CDF of the outcome $Y$.  Let $\Fbar(x) = \myP( X < x) + \frac{1}{2} \, \myP(X = x)$ denote the mid distribution function (MDF) of $X$.  Similarly, let $\Gbar$ denote the MDF of $Y$.  The random variables $\Fbar(X)$ and $\Gbar(Y)$ are called the grades\footnote{The term grade can be traced to \citet[pp.~318--319]{Hoeffding1948} and is meant in the sense of rank.  An alternative label is mid distribution transform as used by \citet{Parzen2004} and \citet{Ma2011}.} of $X$ and $Y$, and the symmetric grade correlation
\begin{align}  \label{eq:rhoG}
\rhoG = 12 \hsp \cov(\Fbar(X),\Gbar(Y))
\end{align}
is the desired alternate version of Spearman's Rho at \eqref{eq:rhoS}.  Of course, if $X = X_c$ and $Y = Y_c$ are continuous, then $\Fbar = F$ and $\Gbar = G$ so that the alternate version agrees with the standard version. 

In the general setting, we introduce the \textit{asymmetric grade correlation}~($\AGC$) of $Y$ on $X$ as 
\begin{align}
\AGC(X,Y) & = \frac{\cov(\Fbar(X), \Gbar(Y))}{\cov(\Gbar(Y), \Gbar(Y))} \label{eq:AGC} \\
          & = \frac{\rhoG}{1 - \threegran(Y)}, \label{eq:AGC_granularity}
\end{align}
where 3-granularity $\threegran$ is defined at \eqref{eq:granularity} and the equality of the representations follows from results in \citet{Niewiadomska2005} and \citet{Pohle2025a}.  The associated \textit{coefficient of monotone association}~($\CMA$) measure, 
\begin{align}  \label{eq:CMA}
\CMA(X,Y) = \frac{1}{2} \left( \AGC(X,Y) + 1 \right), 
\end{align}
is an affine function of $\AGC$.  

Perhaps surprisingly, the following representation of $\CMA$ and $\AGC$ in terms of the similarity function $s$ at \eqref{eq:s} establishes that the measures are normalized.  Specifically, $\CMA$ has range $[0,1]$ and $\AGC$ attains values in the interval $[-1,1]$.

\begin{theorem}  \label{thm:CMA2s} 
For nondegenerate\/ $X$ and\/ $Y$, it holds that
\begin{align}
\CMA(X,Y) = \frac{\myE \left[ (\Gbar(Y'') - \Gbar(Y')) \hsp \one \{ Y' < Y'' \} \hsp s(X',X'') \right]}
            {\myE \left[ (\Gbar(Y'') - \Gbar(Y')) \hsp \one \{ Y' < Y'' \} \right]}. \label{eq:CMA2s}
\end{align}
\end{theorem}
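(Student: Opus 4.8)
The first equality is merely the definition of $\CMA$ in \eqref{eq:CMA}, so the content lies in the second equality. Writing the denominator as $D = \myE[(\Gbar(Y'') - \Gbar(Y'))\hsp\one\{Y'<Y''\}]$ and the numerator as $N = \myE[(\Gbar(Y'') - \Gbar(Y'))\hsp\one\{Y'<Y''\}\hsp s(X',X'')]$, the plan is to show that $N = \cov(\Fbar(X),\Gbar(Y)) + \var(\Gbar(Y))$ and $D = 2\var(\Gbar(Y))$, whence $N/D = \frac{1}{2}(\AGC(X,Y)+1) = \CMA(X,Y)$ by \eqref{eq:AGC}. Nondegeneracy of $Y$ guarantees $\var(\Gbar(Y)) > 0$, so the ratio is well defined.

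The key manoeuvre is an antisymmetrization of $N$. Since $(X',Y')$ and $(X'',Y'')$ are independent copies of $(X,Y)$, relabelling the two pairs leaves $N$ unchanged; combining the relabelled form with the elementary identity $s(X'',X') = 1 - s(X',X'')$, which follows from \eqref{eq:s}, and averaging against the original, I would obtain a second representation. Writing $A = \Gbar(Y'') - \Gbar(Y')$ and $S = s(X',X'')$, and noting that $A\hsp\one\{Y'=Y''\} = 0$ because $\Gbar(Y') = \Gbar(Y'')$ whenever $Y'=Y''$, the cross terms collapse (using $\one\{Y'<Y''\} + \one\{Y''<Y'\} = \one\{Y'\neq Y''\}$ and one further relabelling), and I expect to arrive at $2N = \myE[A\hsp S] + D$.

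It then remains to evaluate $\myE[A\hsp S]$ and $D$ by conditioning. The point is that $A$ factors across the two pairs, so one may integrate $S$ against each pair separately: conditioning on the double-primed pair gives $\myE[s(X',X'')\mid X''] = \Fbar(X'')$, while conditioning on the primed pair gives $\myE[s(X',X'')\mid X'] = 1 - \Fbar(X')$, both consequences of \eqref{eq:s} and the definition of the MDF. Combining these with $\myE[\Fbar(X)] = \myE[\Gbar(Y)] = \frac{1}{2}$, a short calculation should yield $\myE[A\hsp S] = 2\cov(\Fbar(X),\Gbar(Y))$. Running the identical computation with $X$ replaced by $Y$ gives $\myE[A\hsp s(Y',Y'')] = 2\var(\Gbar(Y))$; since $\one\{Y'<Y''\}$ may replace $s(Y',Y'')$ inside $D$, the tie term again vanishing against $A$, this yields $D = 2\var(\Gbar(Y))$.

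Assembling the pieces, $2N = 2\cov(\Fbar(X),\Gbar(Y)) + 2\var(\Gbar(Y))$ and $D = 2\var(\Gbar(Y))$, so that $N/D = \frac{1}{2}\cdot\frac{\cov(\Fbar(X),\Gbar(Y))}{\var(\Gbar(Y))} + \frac{1}{2} = \frac{1}{2}(\AGC(X,Y)+1)$, which is the claim. I expect the antisymmetrization to be the only genuinely non-routine step: once $N$ is rewritten as $\frac{1}{2}(\myE[A\hsp S] + D)$, everything reduces to the two conditioning identities for $s$, which are bookkeeping.
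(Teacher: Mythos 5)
Your proposal is correct and takes essentially the same route as the paper's proof: both hinge on the vanishing of $(\Gbar(Y'')-\Gbar(Y'))\hsp\one\{Y'=Y''\}$, on swapping the two independent copies combined with $s(x'',x') = 1 - s(x',x'')$, and on conditioning via the mid-distribution identities $\myE[s(X',X'')\mid X''] = \Fbar(X'')$ and $\myE[s(X',X'')\mid X'] = 1-\Fbar(X')$, ending in the same raw-moment comparison. The only difference is organizational: you extract the intermediate identity $2N = \myE[A\hsp S]+D$ before evaluating any moments, whereas the paper interleaves the relabeling with the conditioning inside the numerator and handles the denominator via $G(y-)+G(y)=2\hsp\Gbar(y)$.
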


In addition to ensuring the desired normalization, the representation \eqref{eq:CMA2s} demonstrates that $\CMA$ and $\AGC$ are measures of the amount of probability mass between outcomes that order in the same way as the covariate.  If there are ties in the covariate, the corresponding probability mass is discounted by a factor of one half. 

Crucially, the $\CMA$ measure bridges Spearman's Rho and $\AUC$ in the same way that the C index bridges Kendall's Tau and $\AUC$.  If $Y = Y_d$ is dichotomous then \eqref{eq:CMA2s} implies that
\begin{align}  \label{eq:CMA_dichotomous}  
\CMA(X,Y_d) = \AUC(X,Y_d).
\end{align}
On the other hand, when $X = X_c$ and $Y = Y_c$ are continuous then $\AGC(X_c,Y_c) = \AGC(Y_c,X_c) = \rhoG = \rhoS$ and
\begin{align}  \label{eq:CMA_continuous}
\CMA(X_c,Y_c) = \CMA(Y_c,X_c) 
= \frac{1}{2} \left( \hsp \rhoG + 1 \right) 
= \frac{1}{2} \left( \hsp \rhoS + 1 \right) .
\end{align}

In what follows, let $Y_c$ be a continuous random variable with CDF $G$.  For $\alpha \in (0,1)$, we define
\begin{align}  \label{eq:AUC_alpha} 
\AUC^{\hsp (\alpha)}(X,Y_c) = \AUC(X, \one \{ Y_c \geq G^{-1}(\alpha) \})
\end{align}
as the $\AUC$ measure \eqref{eq:AUC} when the continuous outcome $Y_c$ is dichotomized at the $\alpha$-quantile of its marginal distribution.

\begin{theorem}  \label{thm:CMA2AUC} 
If\/ $Y_c$ is continuous, it holds that
\begin{align}  \label{eq:CMA2AUC}
\CMA(X,Y_c) = 6 \int_0^1 \alpha \left( 1 - \alpha \right) \AUC^{\hsp (\alpha)}(X,Y_c) \dint \alpha. 
\end{align} 
\end{theorem}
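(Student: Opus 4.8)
The plan is to reduce the right-hand side of \eqref{eq:CMA2AUC} to the $s$-based representation of $\CMA$ furnished by Theorem~\ref{thm:CMA2s}. First I would unfold the definition \eqref{eq:AUC_alpha} together with the conditional-expectation form \eqref{eq:AUC} of the $\AUC$ measure. Writing the dichotomized outcome as $\one \{ Y \geq G^{-1}(\alpha) \}$, the conditioning event $\{ Y' = 0 \}$ becomes $\{ G(Y') < \alpha \}$ and $\{ Y'' = 1 \}$ becomes $\{ G(Y'') \geq \alpha \}$, so that
\begin{align*}
\AUC^{\hsp (\alpha)}(X,Y) = \frac{\myE \left[ s(X',X'') \hsp \one \{ G(Y') < \alpha \} \hsp \one \{ G(Y'') \geq \alpha \} \right]}{\myP(G(Y') < \alpha) \, \myP(G(Y'') \geq \alpha)} .
\end{align*}
Because $Y$ is continuous, $G(Y)$ is uniform on $(0,1)$, so the denominator equals exactly $\alpha(1 - \alpha)$.

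The second step exploits this cancellation. Multiplying by $\alpha(1-\alpha)$ removes the normalizing denominator, whence
\begin{align*}
6 \int_0^1 \alpha (1 - \alpha) \AUC^{\hsp (\alpha)}(X,Y) \dint \alpha = 6 \int_0^1 \myE \left[ s(X',X'') \hsp \one \{ G(Y') < \alpha \} \hsp \one \{ G(Y'') \geq \alpha \} \right] \dint \alpha .
\end{align*}
I would then invoke Fubini's theorem, which applies since the integrand is bounded and hence integrable on $(0,1)$, to interchange the $\alpha$-integral with the expectation, and evaluate the inner integral pointwise via $\int_0^1 \one \{ G(Y') < \alpha \leq G(Y'') \} \dint \alpha = (G(Y'') - G(Y')) \hsp \one \{ Y' < Y'' \}$. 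Here continuity of $Y$ lets me replace $\Gbar$ by $G$ throughout, and the identification of $\one \{ G(Y') < G(Y'') \}$ with $\one \{ Y' < Y'' \}$ is immediate because the prefactor $G(Y'') - G(Y')$ vanishes precisely on the null event where the two indicators could disagree.

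Finally I would match the resulting expression, $6 \, \myE [ s(X',X'') (G(Y'') - G(Y')) \hsp \one \{ Y' < Y'' \} ]$, against the numerator in \eqref{eq:CMA2s}. It then remains only to verify that the constant $6$ is the reciprocal of the denominator there, that is, that $\myE [ (G(Y'') - G(Y')) \hsp \one \{ Y' < Y'' \} ] = \tfrac{1}{6}$; this follows from a short computation with the independent uniform variates $G(Y')$ and $G(Y'')$, using $\myE \hsp | G(Y'') - G(Y') | = \tfrac{1}{3}$ together with a symmetry argument. I do not expect any single step to present a serious obstacle; the one point demanding care is the bookkeeping around the indicator events and the uniformity of $G(Y)$, which is exactly where the continuity hypothesis enters and where the fortunate cancellation of $\alpha(1-\alpha)$ against the conditioning probabilities makes the identity fall out cleanly.
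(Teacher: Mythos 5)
Your proof is correct and takes essentially the same route as the paper's: both reduce the claim to the representation in Theorem~\ref{thm:CMA2s} via Fubini's theorem and the threshold identity $\int_0^1 \one\{G(Y') < \alpha \leq G(Y'')\} \dint \alpha = (G(Y'') - G(Y')) \hsp \one\{Y' < Y''\}$, with continuity entering through $\Gbar = G$ and the uniformity of $G(Y)$. The only difference is cosmetic: you argue from the right-hand side toward \eqref{eq:CMA2s} and check the normalizing constant via $\myE\left[(G(Y'') - G(Y')) \hsp \one\{Y' < Y''\}\right] = \tfrac{1}{6}$, whereas the paper transforms the numerator and denominator of \eqref{eq:CMA2s} simultaneously and obtains the same constant as $\int_0^1 \alpha (1-\alpha) \dint \alpha = \tfrac{1}{6}$.
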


The representation \eqref{eq:CMA2AUC} was hinted at by \citet[equation (21)]{Gneiting2022b} in the special case when both variables are continuous, to which we tend now.

\begin{corollary}  \label{cor:CMA2AUC}
If\/ $X_c$ and\/ $Y_c$ are continuous, then  
\begin{align} 
\CMA(X_c,Y_c) 
& = 6 \int_0^1 \alpha \left( 1 - \alpha \right) \AUC^{\hsp (\alpha)}(X_c,Y_c) \dint \alpha \label{eq:CMA2AUC_continuous} \\ 
& = \frac{1}{2} \left( \hsp \rhoG + 1 \right) = \frac{1}{2} \left( \hsp \rhoS + 1 \right) \nonumber \\
& = 6 \int_0^1 \alpha \left( 1 - \alpha \right) \AUC^{\hsp (\alpha)}(Y_c,X_c) \dint \alpha = \CMA(Y_c,X_c). \label{eq:CMA2AUC_symmetric}
\end{align} 
\end{corollary}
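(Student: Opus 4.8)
The plan is to assemble this corollary entirely from Theorem \ref{thm:CMA2AUC} and Corollary \ref{cor:continuous}, applied to each variable in turn. Because both $X$ and $Y$ are assumed continuous, all hypotheses of these prior results are met, and no fresh probabilistic computation is required; the work is in stitching the two statements together and verifying that the role reversal is admissible.

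First I would observe that the opening equality \eqref{eq:CMA2AUC_continuous} is just a specialization of Theorem \ref{thm:CMA2AUC}. That theorem requires only that the outcome $Y$ be continuous, so under the stronger hypothesis here that both $X$ and $Y$ are continuous, the representation $\CMA(X,Y) = 6 \int_0^1 \alpha (1 - \alpha) \AUC^{\hsp (\alpha)}(X,Y) \dint \alpha$ carries over verbatim. Second, the two middle equalities $\CMA(X,Y) = \frac{1}{2}(\rhoS + 1) = \CMA(Y,X)$ are precisely the content of Corollary \ref{cor:continuous} and may simply be invoked.

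For the concluding equality \eqref{eq:CMA2AUC_symmetric}, I would apply Theorem \ref{thm:CMA2AUC} a second time, but with the roles of $X$ and $Y$ interchanged. The point to check is that this interchange is legitimate: the theorem needs the variable now playing the role of the outcome, namely $X$, to be continuous, which holds by hypothesis, and the dichotomized measure $\AUC^{\hsp (\alpha)}(Y,X) = \AUC(Y, \one \{ X \geq F^{-1}(\alpha) \})$ is well defined because the quantile function $F^{-1}$ of the continuous variable $X$ is available. This yields $\CMA(Y,X) = 6 \int_0^1 \alpha (1 - \alpha) \AUC^{\hsp (\alpha)}(Y,X) \dint \alpha$, and combining with the symmetry $\CMA(X,Y) = \CMA(Y,X)$ from Corollary \ref{cor:continuous} closes the chain.

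The main obstacle is bookkeeping rather than mathematics: one must confirm that swapping $X$ and $Y$ genuinely re-invokes Theorem \ref{thm:CMA2AUC}, that is, that continuity of $X$ alone suffices for the role reversal and that the symmetrically defined $\AUC^{\hsp (\alpha)}(Y,X)$ is the correct object on the right-hand side of \eqref{eq:CMA2AUC_symmetric}. Given the symmetric continuity assumptions this is immediate, and the corollary follows without further effort.
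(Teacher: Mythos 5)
Your proposal is correct and is essentially the paper's own proof, which simply cites Theorem \ref{thm:CMA2AUC} and Corollary \ref{cor:continuous}; you have spelled out the same two ingredients, including the role-reversed application of Theorem \ref{thm:CMA2AUC} that the paper leaves implicit.
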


\begin{example}  \label{ex:CMA2AUC}
Let $(X,Y)$ be uniform on the triangle $\textrm{T}$ with vertices $(0,0)$, $(\frac{1}{2},1)$, and $(1,0)$ in the Euclidean plane.  Evidently, $X$ and $Y$ are dependent.  However, the dependence of $Y$ on $X$ is not monotone; e.g., if $x < \frac{1}{2}$ then $X \leq x$ implies $Y < 2x$, but $X > 1 - x$ also implies $Y < 2x$.  On the other hand, the conditional distribution of $X$ given $Y = y \in [0,1]$ is uniform on the interval $[\frac{y}{2}, 1 - \frac{y}{2}]$.  Not surprisingly, Spearman's Rho and Kendall's Tau vanish.  In Appendix \ref{app:proofs_population} we show that 
\begin{align*}
\AUC^{\hsp (\alpha)}(X,Y) = \frac{1}{2}
\end{align*}
for $\alpha \in (0,1)$, whereas 
\begin{align*} 
\AUC^{\hsp (\alpha)}(Y,X) = 1 - \frac{\frac{2}{3} \sqrt{2\alpha} + \frac{\alpha}{6}}{1 - \alpha} 
\end{align*} 
and $\AUC^{\hsp (1-\alpha)}(Y,X) = 1 - \AUC^{\hsp (\alpha)}(Y,X)$ for $\alpha \in (0,\frac{1}{2})$.  The common value of $\CMA(X,Y)$ and $\CMA(Y,X)$ in \eqref{eq:CMA2AUC_continuous} and \eqref{eq:CMA2AUC_symmetric}, respectively, is $\frac{1}{2}$, in accordance with $\rhoS = 0$.
\end{example}

Next we summarize key properties of $\AGC$ that carry over to the $\CMA$ measure in obvious ways.

\begin{theorem}  \label{thm:AGC_properties}
Let\/ $(X,Y)$ be nondegenerate. 
\begin{itemize}
\item[(a)] Range: $- 1 \leq \AGC(X,Y) \leq 1$
\item[(b)] Symmetry under equal 3-granularity: $\AGC(X,Y) = \AGC(Y,X)$ if, and only if, $\threegran(X) = \threegran(Y)$
\item[(c)] Change of sign: $\AGC(-X,Y) = \AGC(X,-Y) = - \AGC(X,Y)$
\item[(d)] Independence: If\/ $X$ and\/ $Y$ are independent, then\/ $\AGC(X,Y) = 0$. 
\item[(e)] Perfect monotone predictor property: $\AGC(X,Y) = 1$ if, and only if, there exists a nondecreasing function\/ $m$ such that\/ $Y = m(X)$ almost surely.
\item[(f)] Invariance under strictly increasing transformations: If\/ $f$ and\/ $g$ are strictly increasing, then\/ $\AGC( \hsp f(X),g(Y)) = \AGC(X,Y)$. 
\end{itemize}
\end{theorem}

For the proof see Appendix \ref{app:proofs_population}.  While the normalization in (a) follows readily from the particular representation \eqref{eq:CMA2s}, we have not been able to find a more direct and more intuitive argument.  Except for the modulated symmetry property, which is now in terms of 3-granularity, the attributes in Theorem \ref{thm:AKC_properties} are the same as in Theorem \ref{thm:AKC_properties}.  In particular, $\AGC$ and $\CMA$ have the perfect monotone predictor property and overcome the attainability problem.  Finally, we note that the quantity $\AGC(X,Y)$ arises as the slope in rank-rank regression models \citep{Chetverikov2023}, provided that the ranks take the form of mid ranks.

\subsection{Which measure to choose?}  \label{sec:choice}

Evidently, the choice of the normalization of the measure --- that is, the choice of $\AKC$ versus $\CID$, or $\AGC$ versus $\CMA$ --- is a matter of taste.  A plausible preference for $\CID$ and $\CMA$ is in application areas where researchers have artificially dichotomized outcomes in order to use $\AUC$.  Likewise, $\AKC$ and $\AGC$ are attractive in fields where traditionally symmetric rank correlation coefficients have been used, as they resolve the attainability problem for discrete outcomes, while reducing to the standard measures for continuous variables.

\begin{figure}[t]
\centering
\includegraphics[width = 0.7 \textwidth]{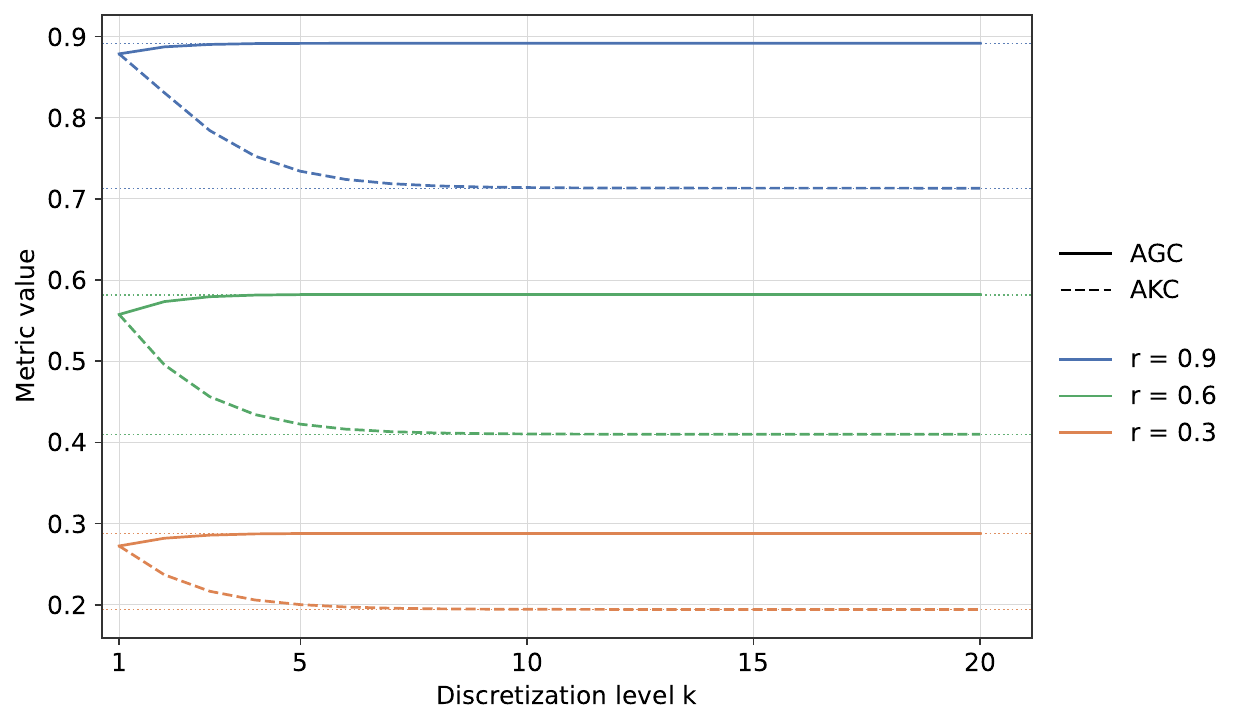}
\caption{The quantities $\AGC(X,Y_k)$ and $\AKC(X,Y_k)$, where $(X,Y)$ is bivariate normal with Pearson correlation $r$ and $Y$ is discretized into an ordinal outcome $Y_k$ with $2^k$ equi-probable values.  The dotted lines show $\rhoS(X,Y)$ and $\tauK(X,Y)$, respectively.  \label{fig:AGC_AKC}}
\end{figure}

In the remainder of the section, we address the more subtle question of the choice between $\AKC$ and $\AGC$, with the arguments applying equally to the choice between $\CID$ and $\CMA$.  For a dichotomous outcome $Y_d$, the measures are identical, in that $\AGC(X,Y_d) = \AKC(X,Y_d) = 2 \AUC(X,Y_d) - 1$, regardless of class balance.  For continuous variables, the choice between $\AKC$ and $\AGC$ reduces to the choice between Kendall`s Tau and Spearman's Rho, for which there is an impressive literature with comprehensive insights and results \citep{Daniels1950, Kruskal1958, Schreyer2017}.  For many continuous distributions with weak dependencies, the ratio of Spearman's Rho over Kendall's Tau equals about three over two \citep{Fredricks2007}.  In particular, for a bivariate Gaussian random vector $(X,Y)$ with Pearson correlation $r \in (0,1)$, it holds that
\begin{align*}
\AGC(X,Y) = \AGC(Y,X) = \rhoG = \rhoS = \frac{6}{\pi} \arcsin \frac{r}{2} < \frac{2}{\pi} \arcsin r = \tauK = \AKC(Y,X) = \AKC(X,Y).  
\end{align*}

Generally, the difference between $\AKC(X,Y)$ and $\AGC(X,Y)$ depends strongly on the granularity of the outcome.  We illustrate this dependence in a setting akin to Section~4.4 and Figure 5 in \citet{Gneiting2022b}.  Specifically, let $(X,Y)$ be bivariate Gaussian with Pearson correlation $r \in \{ 0.3, 0.6, 0.9 \}$, respectively, and discretize $Y$ into an ordinal outcome $Y_k$ with $2^k$ equi-probable values.  Figure \ref{fig:AGC_AKC} shows $\AGC(X,Y_k)$ and $\AKC(X,Y_k)$ as $k$ ranges from 1 to 20.  When $k = 1$, the outcome is dichotomous and we note that $\AGC(X,Y_1) = \AKC(X,Y_1) = 2 \AUC(X,Y_1) - 1$.  As $k$ grows, $\AGC(X,Y_k)$ increases toward $\rhoS(X,Y)$, whereas $\AKC(X,Y_k)$ decreases toward $\tauK(X,Y)$.  In line with the representations at \eqref{eq:AKC_granularity} and \eqref{eq:AGC_granularity} and the fact that $\threegran(Y_k) \geq \twogran(Y_k)$, $\AGC(X,Y_k)$ depends much less on the discretization level $k$ of the outcome than $\AKC(X,Y_k)$.  

These findings inform the choice between the measures.  In particular, if researchers are looking for results that depend little on the discretization level of their outcomes --- a factor that may be arbitrary and beyond their influence --- then they may prefer $\AGC$ over $\AKC$.  A further benefit of $\AGC$ and $\CMA$ is the availability of the mixture representation for $\CMA(X,Y)$ at \eqref{eq:CMA2AUC}, which strengthens the link to the popular $\AUC$ measure.  We are unaware of any such representation for the $\AKC$ and $\CID$ measures.

While $\AKC$, $\CID$, $\AGC$, and $\CMA$ are asymmetric measures of monotone association, a plethora of symmetric measures of monotone association have been developed, such as the maximum information coefficient \citep{Reshef2011}, Bergsma--Dassios sign correlation \citep{Bergsma2014}, symmetric rank correlations \citep{Weihs2018}, Hellinger correlation \citep{Geenens2022}, and generalized correlations \citep{Fissler2023}.  The symmetric measures generally satisfy properties (a), (c), (d), and (f) from Theorems \ref{thm:AKC_properties} and \ref{thm:AGC_properties}, but due to Remark \ref{re:incompatibility} the perfect monotone predictor property (e) must fail.  In this light, we refrain from comparisons to symmetric measures.  

However, we discuss a number of striking commonalities and differences between $\AKC$, $\CID$, $\AGC$, and $\CMA$, respectively, and the recently proposed, asymmetric coefficient $\xi(X,Y)$ of \citet{Chatterjee2021} in Appendix \ref{app:Chatterjee}.  Interestingly, while our coefficients become symmetric when the variables are continuous, we note an example where $\xi(X,Y) \not= \xi(Y,X)$ even though $X$ and $Y$ are continuous.

\section{Sampling theory and tests of DeLong type}  \label{sec:sample}

Following the lead of \citet{Pohle2025a}, we define empirical versions of the $\AKC$, $\CID$, $\AGC$, and $\CMA$ measures in a simple yet efficient way.  Specifically, we plug the empirical law, $\myP_n$, of data 
\begin{align}  \label{eq:data} 
(X_1,Y_1), \ldots, (X_n,Y_n)
\end{align} 
into the population representations at \eqref{eq:AKC}, \eqref{eq:CID_AKC}, \eqref{eq:AGC}, and \eqref{eq:CMA},  to yield empirical coefficients, which we denote by $\AKC_n$, $\CID_n$, $\AGC_n$, and $\CMA_n$, respectively. Building on the classical asymptotic theory of U-statistics \citep{Hoeffding1948} and the far reaching recent work of \citet{Pohle2025a}, we find central limit laws for the empirical coefficients.  Then we use the Gaussian limit distributions to develop interval estimators and tests in the usual ways.  

\subsection{Empirical versions}  \label{sec:empirical}

We begin the section by stating the plug-in empirical versions $\AKC_n$ and $\CID_n$ of the population quantities $\AKC(X,Y)$ at \eqref{eq:AKC} and $\CID(X,Y)$ at \eqref{eq:CID_AKC}, respectively.  

\begin{proposition}  \label{prop:AKC_empirical} 
The empirical asymmetric Kendall correlation equals
\begin{align}  \label{eq:AKC_n}
\AKC_n = \frac{\sum_{i=1}^n \sum_{j=1}^n \left( \one \{ X_i < X_j \} - \one \{ X_i > X_j \} \right) \one \{ Y_i < Y_j \}}{\sum_{i=1}^n \sum_{j=1}^n \one \{ Y_i < Y_j \}}, 
\end{align}
and the empirical C index equals
\begin{align}
\CID_n 
& = \frac{1}{2} \left( \AKC_n + 1 \right) \label{eq:CID_n} \\
& = \frac{\sum_{i=1}^n \sum_{j=1}^n s(X_i, X_j) \one \{ Y_i < Y_j \}}{\sum_{i=1}^n \sum_{j=1}^n \one \{ Y_i < Y_j \}}, \label{eq:CID2s_n}
\end{align}
where the function $s$ is defined at \eqref{eq:s}.
\end{proposition}

We emphasize that the expressions for $\AKC_n$ and $\CID_n$ at \eqref{eq:AKC_n} and \eqref{eq:CID2s_n} recover the provisions in the landmark paper by \citet[Section~5.5]{Harrell1996} that introduced the C index, provided the data are complete, with no censoring being present.\footnote{We note that \citet[p.~371]{Harrell1996} refer to $\AKC_n$ as the Somers' $D$ rank correlation index.  The handling of ties for these coefficients can be intricate, and subsequent work has differed in their treatment.  For detailed discussion, we point the reader at \citet[Section~4.1]{Sierra2025} and references therein.}  While naive implementations require $\cO(n^2)$ operations, the empirical version of Kendall's Tau along with $\AKC_n$ and $\CID_n$ can be computed in $\cO(n \log n)$ operations \citep{Knight1966}. 

Moving towards the plug-in empirical versions $\AGC_n$ and $\CMA_n$ of the population quantities $\AGC(X,Y)$ at \eqref{eq:AGC} and $\CMA(X,Y)$ at \eqref{eq:CMA}, respectively, we introduce notation for a range of equivalent representations.  Specifically, let $\bar{Q}_1, \ldots, \bar{Q}_n$ be the mid ranks \citep{Woodbury1940} associated with the covariate values $X_1, \ldots, X_n$, and let $\bar{R}_1, \ldots, \bar{R}_n$ be the mid ranks associated with the outcomes $Y_1, \ldots, Y_n$.\footnote{The mid rank $\bar{R}_j$ associated with $Y_j$ is $\bar{R}_j = \sum_{i=1}^n \one \{ Y_i < Y_j \} + \frac{1}{2} \sum_{i=1}^n \one \{ Y_i = Y_j \} + \frac{1}{2}$.}  Let $Z_1 < \cdots < Z_M$ denote the $M \in \{ 2, \ldots, n \}$ unique values of $Y_1, \ldots, Y_n$, and define
\begin{align}  \label{eq:N_c}
N_c = \sum_{i=1}^n \one \{ Y_i = Z_c \}, \quad c = 1, \ldots, M,
\end{align}
as the number of instances among the outcomes $Y_1, \ldots, Y_n$ that equal $Z_c$, so that $N_1 + \cdots + N_M = n$.  Then the data at \eqref{eq:data} can be written as
\begin{align}  \label{eq:data_alternative} 
(X_{11}, Z_1), \ldots, (X_{1N_1},Z_1), \, \ldots, \, (X_{M1}, Z_M), \ldots, (X_{M N_M},Z_M),
\end{align} 
where $X_{c1}, \ldots, X_{cN_c}$ are the covariate values that correspond to the unique outcome value $Z_c$.  Let $\widetilde{R}_1, \ldots, \widetilde{R}_M$ be the unique sorted mid ranks associated with $Z_1, \ldots, Z_M$, namely, $\widetilde{R}_1 = \frac{1}{2} (N_1 + 1)$ and $\widetilde{R}_c = \sum_{j=1}^{c-1} N_j + \frac{N_c+1}{2}$ for $c = 2, \ldots, M$.  Then the following representations hold. 

\begin{proposition}  \label{prop:AGC_empirical} 
The empirical asymmetric grade correlation equals
\begin{align}  \label{eq:AGC_n}
\AGC_n = \frac{\sum_{i=1}^n (\bar{Q}_i - \frac{1}{2}(n+1)) \hsp (\bar{R}_i - \frac{1}{2}(n+1))}{\sum_{i=1}^n (\bar{R}_i - \frac{1}{2}(n+1))^2}; 
\end{align}
and the empirical coefficient of monotone association equals
\begin{align}  
\CMA_n
& = \frac{1}{2} \left( \AGC_n + 1 \right) \label{eq:CMA_n} \\
& =  \frac{\sum_{k=1}^n \sum_{l=1}^n (\bar{R}_k - \bar{R}_l) \, \one\{Y_k < Y_l\} \, s(X_k, X_l)}{\sum_{k=1}^n \sum_{l=1}^n (\bar{R}_k - \bar{R}_l) \, \one\{Y_k < Y_l\}}  \label{eq:CMA2s_n} \\
& = \frac{\sum_{i=1}^{M-1} \sum_{j=i+1}^M (\widetilde{R}_j - \widetilde{R}_i) \sum_{k=1}^{N_i} \sum_{l=1}^{N_j} s(X_{ik}, X_{jl})}{\sum_{i=1}^{M-1} \sum_{j=i+1}^M (\widetilde{R}_j - \widetilde{R}_i) N_i N_j} \label{eq:CMA2s_alt_n}.
\end{align}
\end{proposition}

The representations \eqref{eq:AGC_n} and \eqref{eq:CMA_n} demonstrate that $\AGC_n$ and $\CMA_n$ can be computed in $\cO(n \log n)$ operations.  The equivalent expression at \eqref{eq:CMA2s_n} is the plug-in empirical version of the population level representation at \eqref{eq:CMA2s}.

The representation \eqref{eq:CMA2s_alt_n} is a variant of \eqref{eq:CMA2s_n} that facilitates comparison with the empirical coefficient of predictive ability ($\CPA_n$) proposed by \citet{Gneiting2022b}.  For data of the form \eqref{eq:data_alternative}, 
\begin{align}  \label{eq:CPA_n} 
\CPA_n = 
\frac{\sum_{i=1}^{M-1} \sum_{j=k+1}^M \sum_{k=1}^{N_i} \sum_{l=1}^{N_j} ( \hsp j-i) \, s(X_{ik},X_{jl})} 
     {\sum_{i=1}^{M-1} \sum_{j=i+1}^M ( \hsp j-i) \, N_i N_j}.  
\end{align}
However, \citet{Gneiting2022b} do not provide a population version of $\CPA_n$, nor do we believe that such can be developed, except for settings where $\CPA_n$ equals $\CMA_n$.

In general, $\CMA_n$ and $\CPA_n$ are distinct quantities, but they equal each other in important special cases.  Specifically, when the outcome is dichotomous, then $M = 2$ and
\begin{align}  \label{eq:AUC_n} 
\CMA_n = \CPA_n = \AUC_n = \frac{\sum_{k=1}^{N_1} \sum_{l=1}^{N_2} s(X_{1k},X_{2l})}{N_1 N_2},
\end{align}
where $\AUC_n$ is the empirical version of the $\AUC$ criterion at \eqref{eq:AUC}.  Furthermore, when the outcomes $Y_1, \ldots, Y_n$ at \eqref{eq:data} are pairwise distinct, then $M = n$ and $N_c = 1$ for $c = 1, \ldots, n$ at \eqref{eq:N_c}, and the following holds.

\begin{corollary}  \label{cor:CPA} 
Suppose that the outcomes\/ $Y_1, \ldots, Y_n$ for the data at\/ \eqref{eq:data} are pairwise distinct with order statistics $Z_1 < \cdots < Z_n$.  For $i = 1, \ldots, n - 1$, let 
\begin{align*}
\AUC_n^{\hsp (i/n)} = \frac{\sum_{k=1}^n \sum_{l=1}^n s(X_k,X_l) \hsp \one \{ Y_k \leq Z_i < Y_l \}}{i \hsp (n-i)}
\end{align*}
denote $\AUC_n$ for the induced data\/ $(X_1, \one \{ Y_1 > Z_i \}), \ldots, (X_n, \one \{ Y_n > Z_i \})$, where the outcome is dichotomized at the threshold $Z_i$.  Then
\begin{align}   \label{eq:CMA2AUC_n}
\CMA_n = \CPA_n = \frac{6}{n} \sum_{i=1}^{n-1} \frac{i \hsp (n-i)}{(n-1) \hsp (n+1)} \AUC_n^{\hsp (i/n)}. 
\end{align}
\end{corollary}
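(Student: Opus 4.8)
The plan is to reduce both equalities to a single sample-path identity between weighted sums over pairs of observations. Since the outcomes $Y_1, \ldots, Y_n$ are pairwise distinct, we have $M = n$ and $N_c = 1$ for $c = 1, \ldots, n$, so the unique sorted mid ranks collapse to $\widetilde{R}_c = c$. Writing $x_a = X_{a1}$ for the covariate value paired with the $a$th smallest outcome $Z_a$, I would first substitute these values into the representation \eqref{eq:CMA2s_alt_n} of $\CMA_n$ and into the definition \eqref{eq:CPA_n} of $\CPA_n$. In both ratios the inner double sum over $k, l$ reduces to the single term $s(x_a, x_b)$, and the two weights coincide, since $\widetilde{R}_j - \widetilde{R}_i = j - i$ precisely because $\widetilde{R}_c = c$; this is exactly the feature that fails under ties and explains why the two quantities need not agree in general. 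Both therefore collapse to
\begin{align*}
\CMA_n = \CPA_n = \frac{\sum_{a=1}^{n-1} \sum_{b=a+1}^n (b - a) \, s(x_a, x_b)}{\sum_{a=1}^{n-1} \sum_{b=a+1}^n (b - a)},
\end{align*}
which settles the first equality at once.

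Next I would rewrite the dichotomized empirical $\AUC$ at the threshold $Z_{i+1}$ as a block sum over concordance pairs crossing that threshold. The indicator $\one \{ Y_k < Z_{i+1} \leq Y_l \}$ selects exactly those pairs with $Y_k \in \{ Z_1, \ldots, Z_i \}$ and $Y_l \in \{ Z_{i+1}, \ldots, Z_n \}$; reindexing by outcome rank gives
\begin{align*}
i \, (n - i) \, \AUC_n^{\hsp ((i+1)/n)} = \sum_{a=1}^i \sum_{b=i+1}^n s(x_a, x_b).
\end{align*}
Substituting this into the right-hand side of \eqref{eq:CMA2AUC_n} cancels the factor $i \, (n-i)$ and leaves the triple sum
\begin{align*}
\frac{6}{n \, (n-1) \, (n+1)} \sum_{i=1}^{n-1} \sum_{a=1}^i \sum_{b=i+1}^n s(x_a, x_b).
\end{align*}

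The only real work is the combinatorial exchange of summation order in this triple sum together with the verification of the leading constant. For a fixed pair $a < b$, the term $s(x_a, x_b)$ contributes once for each threshold index $i$ with $a \leq i < b$, of which there are exactly $b - a$; hence the triple sum equals $\sum_{a < b} (b - a) \, s(x_a, x_b)$, matching the numerator in the displayed ratio. It then remains to evaluate the normalizing sum $\sum_{a < b} (b - a) = \sum_{d=1}^{n-1} d \, (n - d) = \frac{1}{6} \, n \, (n-1) \, (n+1)$, so that the prefactor $\frac{6}{n (n-1)(n+1)}$ is precisely the reciprocal of the denominator for $\CMA_n$. Matching numerators and denominators yields \eqref{eq:CMA2AUC_n}. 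No analytic estimates enter; the whole argument is an identity that holds sample-path-wise, with the swap of summation order and the closed form of $\sum_d d \, (n-d)$ being the two steps that require care.
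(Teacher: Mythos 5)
Your proof is correct, and on the second equality it takes a genuinely different route from the paper. For the first equality ($\CMA_n = \CPA_n$) you and the paper do the same thing: with pairwise distinct outcomes, $M = n$, $N_c = 1$, and $\widetilde{R}_c = c$, so the representations \eqref{eq:CMA2s_alt_n} and \eqref{eq:CPA_n} visibly coincide, both reducing to
\begin{align*}
\frac{\sum_{a<b} (b-a) \, s(x_a, x_b)}{\sum_{a<b} (b-a)}.
\end{align*}
For the second equality, however, the paper simply outsources the work, citing Theorem 1 in concert with Definition 3 of \citet{Gneiting2022b}, where the decomposition of $\CPA_n$ into a weighted average of dichotomized $\AUC_n$ values was already established. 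You instead give a self-contained combinatorial argument: you express $i\hsp(n-i)\hsp\AUC_n^{\hsp((i+1)/n)}$ as the block sum $\sum_{a \leq i < b} s(x_a,x_b)$ of threshold-crossing pairs, exchange the order of summation so that each pair $a<b$ is counted once per threshold it crosses (exactly $b-a$ times), and verify the normalization $\sum_{d=1}^{n-1} d \hsp (n-d) = \frac{1}{6}\hsp n\hsp(n-1)\hsp(n+1)$, which is indeed correct. What your approach buys is transparency and self-containedness: it makes explicit \emph{why} the weights $i\hsp(n-i)$ arise (each pair contributes to $b-a$ of the dichotomized problems) and it keeps the corollary verifiable within the paper, whereas the paper's citation-based proof is shorter but leaves the reader to reconstruct the identity from the earlier reference. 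Both are valid; yours is arguably the more instructive exposition of an identity that holds sample-path-wise.
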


We interpret the representation \eqref{eq:CMA2AUC_n} for $\CMA_n$ as an empirical version of the relationship \eqref{eq:CMA2AUC} for $\CMA$ at the population level.  Essentially the same representation for $\CMA_n$ holds in slightly more general settings, namely, when all class sizes $N_1 = \cdots = N_M$ at \eqref{eq:N_c} are equal.

\subsection{Asymptotic distributions}  \label{sec:asymptotic}

We turn to the asymptotic distribution of $\AKC_n$ and $\CID_n$, and $\AGC_n$ and $\CMA_n$, with the development drawing heavily on theory recently developed by \citet{Pohle2025a}.  Evidently, it suffices to consider $\AKC_n$ and $\AGC_n$, and we abbreviate the respective population coefficients as $\AKC = \AKC(X,Y)$ and $\AGC = \AGC(X,Y)$, respectively.  For $\CID_n = \frac{1}{2} \hsp (\AKC_n + 1)$ and $\CMA_n = \frac{1}{2} \hsp (\AGC_n + 1)$ the asymptotic variances in the subsequent need to be scaled by a factor of $\frac{1}{4}$.

We begin by stating central limit theorems in the case of independent, identically distributed sequences.  For clarity, we formalize the setting.

\begin{scenario}  \label{sce:1}
Consider a sequence $(X_i,Y_i)_{i = 1, 2, \ldots}$ of independent random vectors, where each $(X_i,Y_i)$ has the same law $\myP$ as $(X,Y)$, with $X$ and $Y$ having strictly positive variance.
\end{scenario}

As before, let $F(x) = \myP(X \leq x)$ and $\Fbar(x) = \myP(X < x) + \frac{1}{2} \hsp \myP(X = x)$ denote the cumulative distribution function (CDF) and the mid distribution function (MDF) of $X$.  Let $G$ and $\Gbar$ be the CDF and MDF of $Y$, and let  
\begin{align*}
\Hbar(x,y) = \myP( X < x, Y < y) + \frac{1}{2} \, \myP( X = x, Y < y) + \frac{1}{2} \, \myP( X  < x, Y = y) + \frac{1}{4} \, \myP( X = x, Y = y)
\end{align*}
denote the bivariate MDF of $(X,Y)$, where $x, y \in \real$.  Let $\tau = \tau(X,Y)$ from \eqref{eq:tau}, define 
\begin{align}  \label{eq:K_Kendall}
K^{(\operatorname{Kendall})}(x,y) = 4 \, \Hbar(x, y) - 2 \left( \Fbar(x) + \Gbar(y) \right) + 1 - \tau
\end{align}
for $x, y \in \real$, and let  
\begin{align}  \label{eq:K_2}
K^{(2)}(y)  =  G(y) - G(y-) - \twogran_G
\end{align}
for $y \in \real$, where $G(y-) = \lim_{z \uparrow y} G(z)$ and $\twogran_G = \twogran(Y)$.  Finally, let $\threegran_F = \threegran(X)$ and $\threegran_G = \threegran(Y)$.  We are now ready to state the limit distribution of $\AKC_n$, where we express the asymptotic variance in terms of the kernels $K^{(\operatorname{Kendall})}$ and $K^{(2)}$ at \eqref{eq:K_Kendall} and \eqref{eq:K_2}, respectively. 

\begin{theorem}  \label{thm:AKC_CLT}
Under Scenario \ref{sce:1} it holds that 
\begin{align}  \label{eq:AKC_CLT}
\sqrt{n} \left( \AKC_n - \AKC \right) \stackrel{d}{\longrightarrow} \: \cN \left( 0, \sigma^2 \right) ,
\end{align}
where
\begin{align}  \label{eq:AKC_sigma}
\sigma^2 = \frac{4}{(1 - \twogran_G)^2} \; \myE \hspm \left( K^{(\operatorname{Kendall})}(X, Y) + \frac{\tau}{1 - \twogran_G} K^{(2)}(Y) \right)^{\! 2}
\end{align}
In particular, if\/ $X$ and\/ $Y$ are independent then
\begin{align}  \label{eq:AKC_CLT_independent}
\sqrt{n} \hsp \AKC_n \stackrel{d}{\longrightarrow} \: \cN \left( 0, \hsp \frac{4}{9} \frac{(1 - \threegran_F) (1 - \threegran_G)}{(1 - \twogran_G)^2}  \right) \! .
\end{align}
\end{theorem}

The theorem subsumes a wide range of classical results, as we discuss below, once we have stated the respective result for $\AGC_n$.  To this end, define $\Ftilde(x) = \myE \hsp [\Hbar(x,Y)]$ for $x \in \real$ and $\Gtilde(y) = \myE \hsp [\Hbar(X,y)]$ for $y \in \real$, let $\rhoG = \rhoG(X,Y)$ from \eqref{eq:rhoG}, and let 
\begin{align}  \label{eq:K_Spearman}
K^{(\operatorname{Spearman})}(x,y) = 4 \left( \Ftilde(x) + \Gtilde(y) + \Fbar(x) \hsp \Gbar(y) - \Fbar(x) - \Gbar(y) \right) + 1 - \rhoG 
\end{align}
for $x, y \in \real$.  Furthermore, let 
\begin{align}  \label{eq:K_3}
K^{(3)}(y)  = (G(y) - G(y-))^2 - \threegran_G
\end{align}
for $y \in \real$.  We now state the limit distribution of $\AGC_n$ in terms of the kernels $K^{(\operatorname{Spearman})}$ and $K^{(3)}$ at \eqref{eq:K_Spearman} and \eqref{eq:K_3}, respectively. 

\begin{theorem}  \label{thm:AGC_CLT}
Under Scenario \ref{sce:1} it holds that 
\begin{align}  \label{eq:AGC_CLT}
\sqrt{n} \left( \AGC_n - \AGC \right) \stackrel{d}{\longrightarrow} \: \cN \left( 0, \sigma^2 \right) \! ,
\end{align}
where
\begin{align}  \label{eq:AGC_sigma}
\sigma^2 = \frac{9}{(1 - \threegran_G)^2} \; \myE \hspm \left( K^{(\operatorname{Spearman})}(X,Y) + \frac{\rhoG}{1 - \threegran_G} K^{(3)}(Y) \right)^2.
\end{align}
In particular, if\/ $X$ and\/ $Y$ are independent then
\begin{align}  \label{eq:AGC_CLT_independent}  
\sqrt{n} \hsp \AGC_n \stackrel{d}{\longrightarrow} \: \cN \left( 0, \frac{1 - \threegran_F}{1 - \threegran_G} \right) \! .
\end{align}
\end{theorem}

Theorems \ref{thm:AKC_CLT} and \ref{thm:AGC_CLT} generalize, unify, and subsume a very wide range of classical results.  For a dichotomous outcome $Y = Y_d$ with success probability $\pi \in (0,1)$, it holds that 
\begin{align*} 
\AKC(X,Y_d) = \AGC(X,Y_d) = 2 \AUC(X,Y_d) - 1.  
\end{align*} 
In this setting, the limit laws at \eqref{eq:AKC_CLT} and \eqref{eq:AGC_CLT} equal each other and simplify to classical results of \citet{Bamber1975} and \citet{DeLong1988}, which we state in terms of $\AUC = \AUC(X,Y_d)$ at \eqref{eq:AUC} and the empirical version $\AUC_n$ at \eqref{eq:AUC_n}.  The literature traditionally operates in terms of the conditional MDFs 
\begin{align*}
\textstyle
\Fbar_0(x) = \myP( X < x \mid Y_d = 0) + \frac{1}{2} \myP( X = x \mid Y_d = 0),     
\quad 
\Fbar_1(x) = \myP( X < x \mid Y_d = 1) + \frac{1}{2} \myP( X = x \mid Y_d = 1),  
\end{align*}
respectively, so that $\Fbar(x) = (1 - \pi) \Fbar_0(x) + \pi \Fbar_1(x)$ for $x \in \real$.  With 
\begin{align*}
\sigma_0^2 = \var( \Fbar_1(X) \mid Y_d = 0), \quad \sigma_1^2 = \var( \Fbar_0(X) \mid Y_d = 1), 
\end{align*}
tedious but straightforward calculations demonstrate\footnote{Details are available from the authors upon request.} that the relations at \eqref{eq:AKC_CLT} and \eqref{eq:AGC_CLT} equal each other and translate into the classical result, namely, 
\begin{align}  \label{eq:AUC_CLT}  
\sqrt{n} \left( \AUC_n - \AUC \right) \stackrel{d}{\longrightarrow} \: \cN \left( 0, \hsp \frac{\sigma_1^2}{\pi} + \frac{\sigma_0^2}{1 - \pi} \right) .
\end{align}
In particular, if $X$ and $Y_d$ are independent then $\AUC = \frac{1}{2}$ and $\sigma_0^2 = \sigma_1^2 = \var(\Fbar(X)) = \frac{1}{12} (1 - \threegran_F)$, so that 
\begin{align*}
\sqrt{n} \left( \AUC_n - \frac{1}{2} \right) \stackrel{d}{\longrightarrow} \: \cN \left( 0, \hsp \frac{1 - \threegran_F}{12 \, \pi \hsp (1-\pi)} \right) .
\end{align*}

Next we discuss the special case in which $X = X_c$ and $Y = Y_c$ are continuous.  Then $\AKC$ equals Kendall's Tau at \eqref{eq:tau} and the asymptotic variance at \eqref{eq:AKC_sigma} specializes to 
\begin{align*}
\sigma^2 = 4 \; \myE \hspm \left( K^{(\operatorname{Kendall})}(X_c,Y_c)^2 \right) ,    
\end{align*}
reducing further to the constant $\sigma^2 = \frac{4}{9}$ in the case of independence, and recovering special cases of results in \citet{Kendall1938}, \citet{Hoeffding1948}, \citet{Dengler2010}, and \citet{Pohle2025a}, among others.  Likewise, $\AGC$ equals Spearman's rank correlation coefficient $\rhoS$ and grade correlation $\rhoG$ and the asymptotic variance at \eqref{eq:AGC_sigma} specializes to 
\begin{align*}
\sigma^2 = 9 \; \myE \hspm \left( K^{(\operatorname{Spearman})}(X_c,Y_c)^2 \right) ,    
\end{align*}
reducing further to $\sigma^2 = 1$ in the case of independence,\footnote{We recall that $\sigma^2 = \frac{4}{9}$ for Kendall's Tau in the case of independence, which reflects the fact that, for many continuous distributions with weak dependencies, the ratio of Kendall's Tau over Spearman's Rho equals about two over three \citep{Fredricks2007}.} and overlapping with extant results of \citet{Hoeffding1948}, \citet{Cifarelli1996}, \citet{Borkowf2002}, \citet{Genest2013}, and \citet{Pohle2025a}, among others.

In the general case of a nondegenerate joint distribution $\myP$ of $(X,Y)$, the limit variance in Theorem \ref{thm:AKC_CLT} corresponds to finite sample expressions in \citet{Daniels1947}, \citet{Cliff1991}, \citet{Dengler2010}, \citet{Obuchowski2006}, \citet{Demler2017}, and \citet{Zou2022}, and asymptotic arguments in \citet{Kang2015}.  Theorem \ref{thm:AGC_CLT} is original in the general case.

Let us now turn to the setting of $m$ covariates, features, or markers $X^{(1)}, \ldots, X^{(m)}$, which we interpret as competing or complementary predictors for the real-valued outcome $Y$.

\begin{scenario}  \label{sce:2}
Consider a sequence 
\begin{align}  \label{eq:data_MV}
\left( X_i^{(1)}, \ldots, X_i^{(m)}, Y_i \right)_{i = 1, 2, \ldots}
\end{align}
of independent random vectors, where each $(X_i^{(1)}, \ldots, X_i^{(m)}, Y_i)$ has the same distribution $\myP$ as $(X^{(1)}, \ldots, X^{(m)}, Y)$, with $X^{(1)}, \ldots, X^{(m)}$, and $Y$ all having strictly positive variance.
\end{scenario}

For $k = 1, \ldots, m$, we let 
\begin{align}  \label{eq:tau_k}
\tau^{(k)} = \myE \left( \sgn \! \left( X^{(k)}  - {X^{(k)}}' \, \right) \sgn \hsp ( Y - Y' \hsp ) \right)
\end{align}
and $\AKC^{(k)} = \AKC(X^{(k)},Y)$, and we write $\AKC^{(k)}_n$ for the asymmetric Kendall correlation under the empirical law of the first $n$ tuples in the sequence at \eqref{eq:data_MV}.  Furthermore, we let $\Fbar^{(k)}$ denote the univariate MDF of $X^{(k)}$, and we let $\Hbar^{(k)}$ denote the bivariate MDF of $(X^{(k)},Y)$.  In the subsequent result, we express the entries of the asymptotic covariance matrix in terms of the kernel functions 
\begin{align*}
K^{(\operatorname{Kendall}, \, k)}(x,y) = 4 \hsp \Hbar^{(k)}(x, y) - 2 \left( \Fbar^{(k)}(x) + \Gbar(y) \right) + 1 - \tau^{(k)}
\end{align*} 
for $k = 1, \ldots, m$, and $K^{(2)}(y)$ at \eqref{eq:K_2}, respectively.

\begin{theorem}  \label{thm:AKC_CLT_MV}
Under Scenario \ref{sce:2} it holds that
\begin{align*}
\sqrt{n} \left( \begin{pmatrix} \AKC_n^{(1)} \\ \vdots \\ \AKC_n^{(m)} \end{pmatrix} - \begin{pmatrix} \AKC^{(1)} \\ \vdots \\ \AKC^{(m)} \end{pmatrix} \right) 
\stackrel{d}{\longrightarrow} \: \cN \! \left( \begin{pmatrix} 0 \\ \vdots \\ 0 \end{pmatrix} , \Sigma = \left( \Sigma^{(k,l)} \right)_{k,l=1}^m \right) ,
\end{align*}
where 
\begin{align}  
\Sigma^{(k,l)} & = \frac{4}{(1 - \twogran_G)^2} \: \times \label{eq:AKC_sigma_kl} \\
& \myE \left[ \left( K^{(\operatorname{Kendall}, \, k)}(X^{(k)}, Y) + \frac{\tau^{(k)}}{1 - \twogran_G} \, K^{(2)}(Y) \right) \! \left( K^{(\operatorname{Kendall}, \, l)}(X^{(l)}, Y) + \frac{\tau^{(l)}}{1 - \twogran_G} \, K^{(2)}(Y) \right) \nonumber
\right]
\end{align}
for\/ $k, l = 1, \ldots, m$.
\end{theorem}

For $k = 1, \ldots, m$, we now let 
\begin{align*}
\rhoG^{\, (k)} = \rhoG(X^{(k)},Y) = 12 \hsp \cov(\Fbar^{(k)}(X^{(k)}),\Gbar(Y))
\end{align*}
and $\AGC^{(k)} = \AGC(X^{(k)},Y)$, and we write $\AGC^{(k)}_n$ for the asymmetric grade correlation under the empirical law of the first $n$ tuples in the sequence at \eqref{eq:data_MV}.  Furthermore, we define the nondecreasing functions $\Ftilde^{(k)}(x) = \myE \hsp [\Hbar^{(k)}(x,Y)]$ for $x \in \real$ and $\Gtilde^{(k)}(y) = \myE \hsp [\Hbar^{(k)}(X^{(k)}, y)]$ for $y \in \real$.  In the subsequent theorem we express the asymptotic covariance matrix in terms of the kernel functions 
\begin{align*}
K^{(\operatorname{Spearman}, k)}(x,y) = 4 \left( \Ftilde^{(k)}(x) + \Gtilde^{(k)}(y) + \Fbar^{(k)}(x) \hsp \Gbar(y) - \Fbar^{(k)}(x) - \Gbar(y) \right) + 1 - \rhoG^{\, (k)}
\end{align*} 
for $k = 1, \ldots, m$, and $K^{(3)}(y)$ at \eqref{eq:K_3}, respectively.

\begin{theorem}  \label{thm:AGC_CLT_MV}
Under Scenario \ref{sce:2} it holds that
\begin{align*}
\sqrt{n} \left(
\begin{pmatrix} \AGC_n^{(1)} \\ \vdots \\ \AGC_n^{(m)} \end{pmatrix} - \begin{pmatrix} \AGC^{(1)} \\ \vdots \\ \AGC^{(m)} \end{pmatrix} 
\right) \stackrel{d}{\longrightarrow} \: \cN \! \left( \begin{pmatrix} 0 \\ \vdots \\ 0 \end{pmatrix} , \Sigma = \left( \Sigma^{(k,l)} \right)_{k,l=1}^m
\right) ,
\end{align*}
where 
\begin{align}
\Sigma^{(k,l)} & = \frac{9}{(1 - \threegran_G)^2} \: \times \label{eq:AGC_sigma_kl} \\
& \myE \left[ \left( K^{(\operatorname{Spearman}, \, k)}(X^{(k)}, Y) + \frac{\rhoG^{\, (k)}}{1 - \threegran_G} \, K^{(3)}(Y) \right) \! \left( K^{(\operatorname{Spearman}, \, l)}(X^{(l)}, Y) + \frac{\rhoG^{\, (l)}}{1 - \threegran_G} \, K^{(3)}(Y) \right) \right] \nonumber
\end{align}
for\/ $k, l = 1, \ldots, m$.
\end{theorem}

As in the univariate case, Theorems \ref{thm:AKC_CLT_MV} and \ref{thm:AGC_CLT_MV} nest extant findings when the outcome $Y$ is dichotomous \citep{DeLong1988} and when $X$ and $Y$ are continuous \citep{Hoeffding1948, Gaisser2010, Pohle2025a}.

If we wish to use these asymptotic results to generate confidence intervals or test hypotheses, we need to replace the population quantities that appear in the entries of the asymptotic covariance matrices by suitable estimates.  For doing this, we follow \citet{Pohle2025a} and proceed stepwise, by evaluating the population quantities under the empirical law of the first $n$ terms in \eqref{eq:data_MV}.  

In the case of $\AKC$ we find multi-step plug-in estimates as follows.  We use subscripts to distinguish sample quantities from population quantities, and to avoid double subscripts we write $\twogran_n$ for the sample version of the 2-granularity $\twogran_G$. 
\begin{enumerate}
\item Compute the quantities $\tau^{(k)}_n$, $\twogran_n$, $G_n(Y_i) - G_n(Y_i-)$, $\Gbar_n(Y_i)$, $\Fbar^{(k)}_n(X_i^{(k)})$, and $\Hbar^{(k)}_n(X_i^{(k)},Y_i)$, where $i = 1, \dots, n$ and $k = 1, \ldots, m$.  
\item Employ the first step estimates to compute the kernel values $K^{(\operatorname{Kendall}, \, k)}_n(X_i^{(k)},Y_i)$ and $K^{(2)}_n(Y_i)$, where $i = 1, \ldots, n$ and $k = 1, \ldots, m$.
\item Plug the first and second step estimates into the sample version of \eqref{eq:AKC_sigma_kl}.
\end{enumerate}
We denote the resulting multi-step plug-in estimate by $\hat\Sigma_n$ and refer to its entries as $\hat\Sigma^{(k,l)}_n$, where $k, l = 1, \ldots, m$.  When $m = 1$ we denote the estimate of the asymptotic variance by $\hat\sigma^2_n$.  Under Scenario \ref{sce:2} the arguments in \citet{Pohle2025a} imply that $\hat\Sigma_n \rightarrow \Sigma$ in probability as $n \rightarrow \infty$.  The computational complexity of the naive implementation of the variance estimator is $\cO(m \hsp n^2 + m^2 \hsp n)$; using Fenwick trees \citep{Fenwick1994} the effort reduces to $\cO(m \hsp n \log n + m^2 \hsp n)$.

In the case of $\AGC$, we find multi-step plug-in estimates as follows.  Again we use subscripts to distinguish sample quantities from population quantities, and to avoid double subscripts we write $\threegran_n$ for the sample version of the 3-granularity $\threegran_G$. 
\begin{enumerate} 
\item Compute the quantities $\threegran_n$, $G_n(Y_i) - G_n(Y_i-)$, $\Gbar_n(Y_i)$, $\Fbar^{(k)}_n(X_i^{(k)})$, and $\Hbar^{(k)}_n(X_i^{(k)},Y_i)$, where $i = 1, \dots, n$ and $k = 1, \ldots, m$.  
\item Use the first step estimates to find $\rhoG_n^{\, (k)}$, $\Ftilde^{(k)}_n(X_i^{(k)})$, and $\Gtilde^{(k)}_n(Y_i)$, where $i = 1, \ldots, n$ and $k = 1, \ldots, m$.
\item Employ the first and second step estimates to compute the kernel values $K^{(\operatorname{Spearman}, \, k)}_n(X_i^{(k)},Y_i)$ and $K^{(3)}_n(Y_i)$, where $i = 1, \ldots, n$ and $k = 1, \ldots, m$.
\item Plug the first, second, and third step estimates into the sample version of \eqref{eq:AGC_sigma_kl}.  
\end{enumerate}
Again, we denote the resulting multi-step plug-in estimates by $\hat\Sigma^{(k,l)}_n$ and $\hat\sigma^2_n = \hat\Sigma^{(1,1)}_n$, respectively, and under Scenario \ref{sce:2} it holds that $\hat\Sigma_n \rightarrow \Sigma$ in probability as $n \rightarrow \infty$ \citep{Pohle2025a}. 

Let us now discuss computational facets.  The granularities can be computed from the empirical frequency table of the outcomes in $\cO(n)$ operations.  Consequently, the computational bottlenecks lie in the computation of the bivariate rank quantities and in the construction of the $m \times m$ matrix $\hat\Sigma_n$ in the final step, which require (naively implemented) up to $\cO(m \hsp n^2 + m^2n)$ operations.  Using prefix-sum techniques \citep{Fenwick1994}, the cost per predictor reduces from $\cO(n^2)$ to $\cO(n \log n)$, for a total cost of $\cO(m \hsp n \log n + m^2n)$ operations.  

Alternative, asymptotically equivalent variance estimators based on the infinitesimal jackknife \citep{Efron1982} share this favorable rate of computational complexity and, in the case of $\CID$, have been implemented in the \texttt{survival} package \citep{survival} in \textsf{R} \citep{R}.  We refrain from a technical discussion and refer to \citet{Arvesen1969}, \citet{Jaeckel1972}, \citet{Schechtman1991}, \citet{Antolini2004}, \citet{Newson2006}, and \citet{survival} for details.  For a dichotomous outcome, the abundance of ties can be leveraged, and we point at \citet{DeLong1988}, \citet{Sun2014}, and \citet{Demler2017} for details on variance estimators for $\AUC$ and differences in $\AUC$.  In the continuous case, \citet{Gaisser2010} propose a nonparametric bootstrap technique for variance estimation.

Finally, we note from \citet{Pohle2025a} that more general versions of Propositions \ref{prop:AKC_MV} and \ref{prop:AGC_MV} in Appendix \ref{app:proofs_sample} hold in dependent time series settings.  Therefore, our central limit theorems can be generalized as well.  In practice, it suffices to replace the multi-step plug-in estimates of the population quantities at \eqref{eq:AKC_sigma_kl} and \eqref{eq:AGC_sigma_kl} by heteroskedasticity and autocorrelation consistent estimates \citep{Newey1987}, for which \citet{Pohle2025a} provide implementation details and prove consistency.

Thus far we have operated under Scenario \ref{sce:2} with interest in inference for the relations between $m$ predictors and a shared outcome $Y$.  More generally, we might consider a sequence
\begin{align}  \label{eq:data_MV_generalized}
\left( X_i^{(1)}, \ldots, X_i^{(m)} \right)_{i = 1, 2, \ldots}
\end{align}
of independent random vectors, where each $(X_i^{(1)}, \ldots, X_i^{(m)})$ has the same distribution $\myP$ as $(X^{(1)}, \ldots, X^{(m)})$, but now with interest in inference for the $m^2$ quantities 
\begin{align*}
\AKC^{(i,j)} = \AKC(X^{(i)}, X^{(j)}) \quad \text{or} \quad \AGC^{(i,j)} = \AGC(X^{(i)}, X^{(j)}), 
\end{align*}
respectively, where $i, j = 1, \ldots, m$.  The resulting multivariate normal limit distribution is degenerate\footnote{Degeneracies stem not only from the inclusion of $\AKC^{(i,i)}$ or $\AGC^{(i,i)}$, but may also arise from indices satisfying $k = j$ and $l = i$, or from issues discussed by \citet[Section~2.4]{Dengler2010}.} and has a singular covariance matrix $\Sigma$ of dimension $m^2 \times m^2$ with components $\Sigma_{(i,j),(k,l)}$ for $i, j, k, l = 1, \ldots, m$.  The details are tedious and we leave them to future work.  

\subsection{General tests of DeLong type}  \label{sec:DeLong}

With variance estimates at hand, we can use the limit distributions from Theorems \ref{thm:AKC_CLT}--\ref{thm:AGC_CLT} and \ref{thm:AKC_CLT_MV}--\ref{thm:AGC_CLT_MV} to generate confidence intervals and test hypotheses about $\AKC$, $\AGC$, $\CID$, and $\CMA$.  As these measures nest the $\AUC$ measure for dichotomous outcomes, and Kendall's Tau and Spearman's Rho for continuous variables, inference tools for these classical measures arise as special cases.  Evidently, the methods are the same for all measures, and we focus the discussion on the $\AGC$ measure.  

To generate confidence intervals, the asymptotic variances at \eqref{eq:AKC_sigma} and \eqref{eq:AGC_sigma} can be employed as usual.  For testing, let us address the most basic case first, namely, tests of independence.  Under independence of $X$ and $Y$, it holds that $\AGC(X,Y) = 0$, and we use the test statistic 
\begin{align}  \label{eq:simple}
T_n = \sqrt{n} \; \frac{\AGC_n}{\hat\sigma_n},
\end{align}
where $\AGC_n$ and $\hat\sigma_n$ are the estimates from Sections \ref{sec:empirical} and \ref{sec:asymptotic}, respectively.  By Theorem \ref{thm:AGC_CLT} and the consistency of $\hat\sigma_n$, $T_n$ is asymptotically standard normal under independence.  With $\Phi$ denoting the CDF of a standard normal distribution, we find a one-sided $p$-value, $p = \Phi(T_n)$, or a two-sided $p$-value, $p = 2 \left( 1 - \Phi(|T_n| \right)$, in the usual way.  Similarly, we test simple hypotheses of the form $\AGC = a_0 \in (-1,1)$ or $\CMA = b_0 \in (0,1)$.

Next we discuss tests of the equality of two or more $\AGC$ or $\CMA$ values for a shared outcome, as developed by \citet{DeLong1988} in the dichotomous case.  Tests of this type address the ubiquitous task of the comparison of the potential predictive ability of competing covariates, scores, features, or markers, as frequently encountered in machine learning research \citep{Rainio2024}.  Specifically, we consider the case $m = 2$ in Scenario \ref{sce:2}, where we interpret $X^{(1)}$ and $X^{(2)}$ as predictors of the outcome $Y$ and test the hypothesis 
\begin{align}  \label{eq:H0_pairwise}
H_0 : \AGC(X^{(1)},Y) = \AGC(X^{(2)},Y).
\end{align}
For this we use the test statistic 
\begin{align}  \label{eq:pairwise}
\Delta_n = \sqrt{n} \; \frac{\AGC^{(1)}_n - \AGC^{(2)}_n}{\left( \hat\Sigma_n^{(1,1)} + \hat\Sigma_n^{(2,2)} - 2 \hsp \hat\Sigma_n^{(1,2)} \right)^{1/2}},
\end{align}
where $\hat\Sigma_n$ is the multi-step plug-in estimate from Section~\ref{sec:asymptotic}.  By Theorem \ref{thm:AGC_CLT_MV} and the consistency of $\hat\Sigma_n$, $\Delta_n$ is asymptotically standard normal under \eqref{eq:H0_pairwise}.  Therefore, one-sided and two-sided tests yield $p$-values $1 - \Phi(\Delta_n)$ and $2 \left( 1 - \Phi(|\Delta_n|) \right)$, respectively.  Evidently, tests for the equality of $\AUC$ (for dichotomous outcomes) and Spearman's Rho (in the continuous case) arise as special cases.  Similarly, tests based on $\AKC$ nest tests for the equality of $\AUC$ (for dichotomous outcomes) and Kendall's Tau (in the continuous case).  For dichotomous outcomes, the tests based on $\AGC$ and $\AKC$ coincide and reduce to the classical test of \citet{DeLong1988}, except for a minor difference in the finite-sample variance estimates.\footnote{For the data at \eqref{eq:data}, let $n_0 = \sum_{i=1}^n \one (Y_i = 0)$ and $n_1 = \sum_{i=1}^n \one (Y_i = 1)$.  The two estimators differ only in the divisor for $\sigma_0^2$ and $\sigma_1^2$: We use $n_0$ and $n_1$, whereas \citet{DeLong1988} use $n_0 - 1$ and $n_1 - 1$, with all else being equal.}  In this light, we view the two tests --- based on $\AGC$ and $\AKC$, respectively --- as generalized tests of \citet{DeLong1988} type that apply to all real-valued outcomes, covering the full range from dichotomous to continuous variables. 

In the case of $m \geq 2$ predictors $X^{(1)}, \ldots, X^{(m)}$ for a shared outcome $Y$ we adopt notation and terminology from Theorem \ref{thm:AGC_CLT_MV} and test the hypothesis that $\AGC^{(1)} = \cdots = \AGC^{(m)}$ as follows.  Let $L$ be a fixed matrix with $m$ columns, with each row representing a contrast.  For example, when $m = 4$ the matrix
\begin{align*}
L = \begin{bmatrix}
1 & -1 & 0 & 0 \\
1 & 0 & -1 & 0 \\
1 & 0 & 0 & -1 \\
0 & 1 & -1 & 0 \\
0 & 1 & 0 & -1 \\
0 & 0 & 1 & -1 \\ \end{bmatrix}
\end{align*}
represents all six possible contrasts.  Let $\AGC_n = (\AGC^{(1)}_n, \ldots, \AGC^{(m)}_n)$.  Under the null hypothesis, Theorem \ref{thm:AGC_CLT_MV} and the consistency of the estimator $\hat\Sigma_n$ for $\Sigma$ imply that the test statistic 
\begin{align*}
\chi_n = \AGC_n L' \hsp \left( L \hsp \hat\Sigma_n \hsp L' \right)^{-1} L \hsp \AGC_n'
\end{align*}
is asymptotically chi-square distributed with degrees of freedom equal to the rank $\ell$ of the matrix $L \Sigma L'$.  Accordingly, $p$-values can be found from the CDF of the chi-square distribution with $\ell$ degrees of freedom.  This test and its analogue for $\AKC$ also generalize the \citet{DeLong1988} methodology that applies in the nested case when the outcome is dichotomous.

\subsection{Simulation examples}  \label{sec:simulation} 

In this section, we illustrate our methods in simple simulation settings.  In the first example, we address the continuous case, where $\AGC$ and $\AKC$ are symmetric and equal Spearman's Rho and Kendalls Tau, respectively.  Specifically, we let $X^{(0)}, Z^{(1)}$, and $Z^{(2)}$ be independent and standard normal.  Conditionally on $X^{(0)}$, we let the outcome $Y$ be normally distributed with mean $X^{(0)}$ and variance 1, and we seek to compare the correlated predictors $X^{(1)} = X^{(0)} + Z^{(1)}$ and $X^{(2)} = X^{(0)} + Z^{(2)}$.  By construction, $(X^{(1)},Y)$ and $(X^{(2)},Y)$ are equal in distribution, so the null hypotheses 
\begin{align*}
H_0 : \AGC(X^{(1)},Y) = \AGC(X^{(2)},Y) \quad \textrm{and} \quad H_0 : \AKC(X^{(1)},Y) = \AKC(X^{(2)},Y) 
\end{align*}
of equal correlation values are satisfied.  In a second example, we retain the above setting, but discretize $X^{(1)}$, $X^{(2)}$, and $Y$, by rounding to the nearest integer.  Evidently, the above null hypotheses remain valid.  We recall that a valid test generates $p$-values with a  uniform distribution on the unit interval under the null hypothesis \citep[Lemma 3.3.1]{Lehmann2022}, which guarantees that tests at all levels attains the nominal size.

\begin{figure}[tb]  

\centering

\includegraphics[width = 0.95 \textwidth]{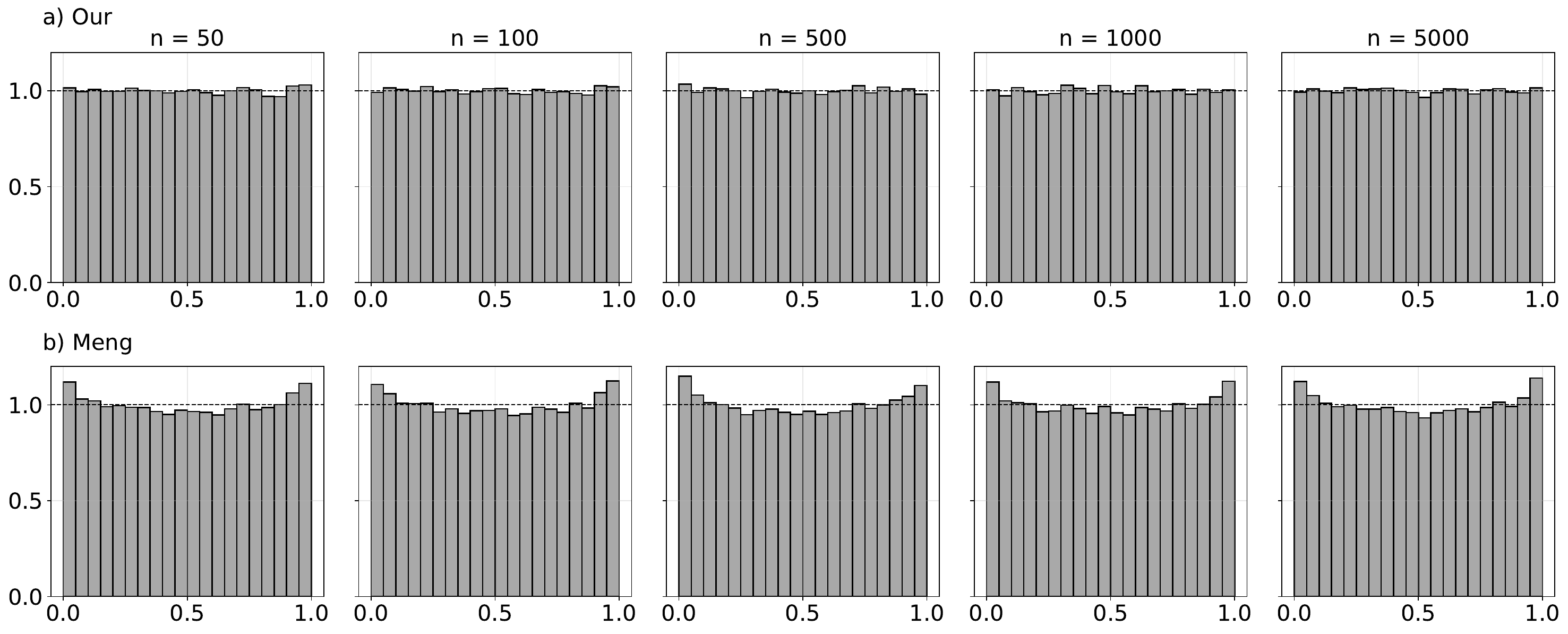}
\caption{Histograms of one-sided $p$-values in tests of equal $\AGC$ in the first, continuous simulation example, where $\AGC$ equals Spearman's Rho, using a) the general test of DeLong type with test statistic at \eqref{eq:pairwise}, and b) the method of \citet{Myers2014} with the \citet{Meng1992} adjustment for correlation.  The sample size $n$ equals $50$, $100$, $500$, $1000$, and $5000$ (left to right), and the number of Monte Carlo replicates is $100,000$.  \label{fig:sim_1}}

\bigskip

\includegraphics[width = 0.95 \textwidth]{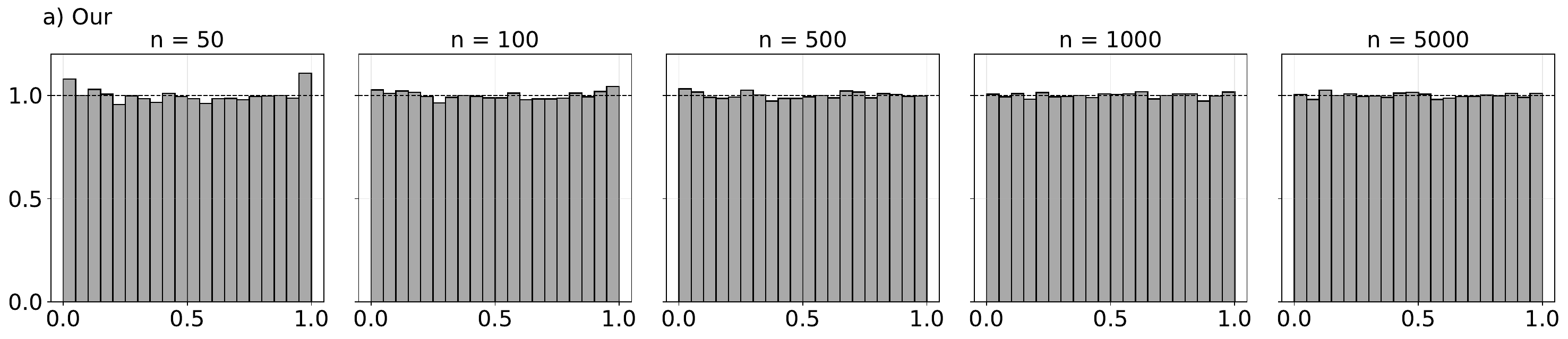}
\caption{Histograms of one-sided $p$-values in tests of equal $\AGC$ in the second, discrete simulation example, using the general test of DeLong type with test statistic at \eqref{eq:pairwise}.  The sample size $n$ equals $50$, $100$, $500$, $1000$, and $5000$ (left to right), and the number of Monte Carlo replicates is $100,000$.  \label{fig:sim_2a}}

\end{figure}

In the first, continuous simulation setting $\AGC$ equals Spearman's Rho at \eqref{eq:rhoS}.  Perhaps surprisingly, tests targeted specifically at the comparison of two Spearman rank correlation coefficients do not appear to be available.  Instead, the extant literature recommends that Spearman's Rho be treated as if it was a Pearson correlation, and then standard methods for the comparison of Pearson correlation coefficients be used \citep{Myers2014}.  Therefore, we compare our test to the test for the equality of correlated Pearson correlation coefficients developed by \citet{Meng1992},\footnote{Specifically, we apply Fisher's transformation \citep{Fisher1915} to $\AGC_n^{(1)}$ and $\AGC_n^{(2)}$, to obtain $Z^{(1)}$ and $Z^{(2)}$, respectively, and then use eq.~(1) of \citet{Meng1992} to adjust for correlation.} as implemented in the \texttt{cocor} package \citep{Diedenhofen2015} for \textsf{R}.  Figure \ref{fig:sim_1} shows histograms of one-sided $p$-values for tests at sample size $n = 50$, $100$, $500$, $1000$, and $5000$, respectively.  The test for equal Spearman's Rho with test statistic at \eqref{eq:pairwise} yields essentially uniform histograms even when $n = 50$.  In contrast, the method of \citet{Myers2014} with the \citet{Meng1992} correction yields slightly U-shaped histograms at all sample sizes, so at the levels typically used in practice, the null hypothesis gets rejected more often than warranted.  

\begin{figure}[tb]  
\centering
\includegraphics[width = 0.95 \textwidth]{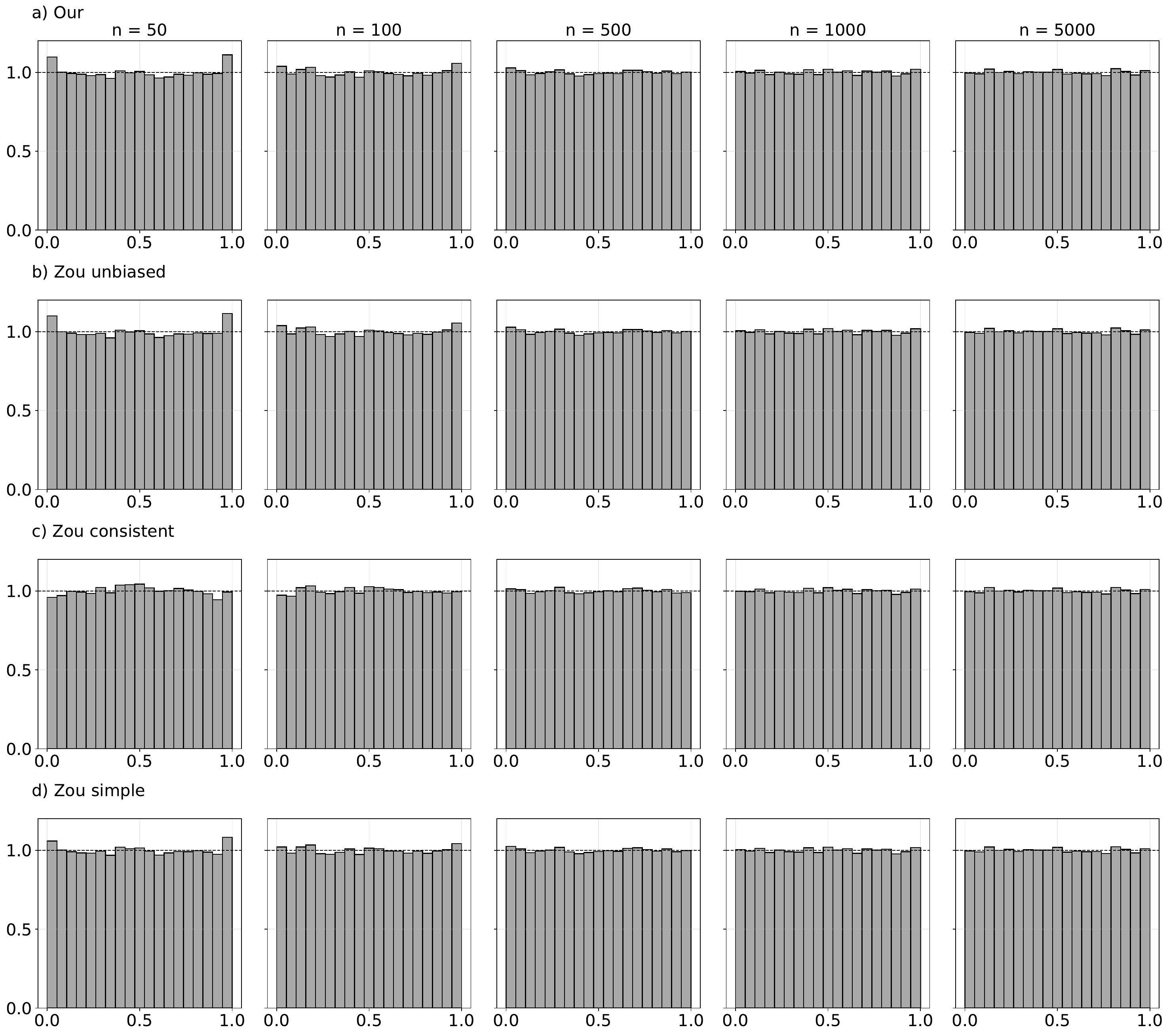}
\caption{Histograms of one-sided $p$-values in tests of equal $\AKC$ in the second, discrete simulation example, using the general test of DeLong type with test statistic at \eqref{eq:pairwise} and a) the plug-in variance estimator from our Section~\ref{sec:asymptotic}, and b) the unbiased, c) the consistent, and d) the simple variance estimator from \citet{Zou2022}.  The sample size $n$ equals $50$, $100$, $500$, $1000$, and $5000$ (left to right), and the number of Monte Carlo replicates is $100,000$.  \label{fig:sim_2b}}
\end{figure}

In the general, possibly discrete case the $\AGC$ and $\CMA$ measures have not been studied before, so we are unable to compare to extant methods.  Tending to our second, discrete simulation setting, Figure \ref{fig:sim_2a} demonstrates that the test for equal $\AGC$ with test statistic at \eqref{eq:pairwise} yields $p$-values that are essentially uniform, even at modest sample sizes.  For the $\AKC$ and $\CID$ measures, the literature provides a comprehensive set of variance estimators that apply in the general case.  In particular, \citet[Section~2.2]{Zou2022} review what they call the unbiased, the consistent, and the simple variance estimator, respectively, for a difference in these quantities.  Figure \ref{fig:sim_2b} shows histograms of one-sided $p$-values for the test of equal $\AKC$ with test statistics at \eqref{eq:pairwise} in the second simulation setting, where we compare the variant with our plug-in variance estimator (top row) to  variants with the estimators studied by \citet{Zou2022}.  While there are differences at the smallest sample size, where the consistent estimator stands out, all variants yield essentially uniform $p$-values when $n \geq 100$.

\section{Case studies}  \label{sec:case_studies}

We now showcase the use of the asymmetric measures in case studies on topical scientific issues.  First, we consider a recent study of \citet{Huang2024} and assess methods for uncertainty quantification in the output of large language models (LLMs).  Then we follow up on the scientific revolution in weather prediction \citep{Bi2023, Lam2023, BenBouallegue2024}, where data-driven methods are replacing previously used physics-based approaches.  In a study of the WeatherBench 2 benchmark dataset \citep{Rasp2024}, we shed new light on the predictive abilities of data-driven versus physics-based precipitation forecasts.

Both case studies concern the evaluation of predictive performance, where measures of monotone association with the crucial properties (e) and (f) from Theorems \ref{thm:AKC_properties} and \ref{thm:AGC_properties} are frequently preferred.  The perfect monotone predictor property (e) ensures that a potential predictor variable receives a perfect score if, and only if, the outcome is a nondecreasing, deterministic function of the predictor variable.  Invariance under strictly increasing transformations of the measures, as formalized by property (f), ensures that the measure quantifies discrimination ability, or potential predictive ability, as opposed to measures like squared error that target actual predictive ability, a task for which we refer to \citet{Gneiting2011}.  Evidently, both measures of potential predictive ability and actual predictive ability have important and ubiquitous usages.  Preferences for the former typically rely on the fact that discrimination ability can be improved by postprocessing, as aptly voiced by \cite[p.~2119]{Antolini2004}: 
\begin{quote}
\small
``[d]iscrimination is recognized as the first priority in judging prediction accuracy, since unlike the calibration, that can be improved through recalibration [\ldots] discrimination cannot be improved by any adjustment''.
\end{quote}
As reasoned in Section~\ref{sec:choice}, we prefer $\CMA$ and $\CID$ over $\AGC$ and $\AKC$ in the case studies, due to their immediate link to the $\AUC$ criterion that has been commonly used in the respective scientific communities.  Furthermore, we focus on $\CMA$, due to its lesser dependence on possibly artificial discretization levels in outcomes.

\subsection{Evaluating uncertainty metrics for large language models}  \label{sec:LLM}

Recent breakthroughs in large language models (LLMs) have prompted the development of methods for uncertainty quantification for their outputs, aiming to mitigate hallucinations and improve interpretability and usability \citep{Shorinwa2025}.  Uncertainty quantification for the output of LLMs bears challenges that hinder the direct application of evaluation methods developed for classification and regression tasks.  To address the specific requirements in this setting, \citet{Huang2024} introduce a rank calibration framework and propose a new measure, the rank calibration error ($\RCE$), to evaluate uncertainty or confidence measures for LLMs.  

In their study, \citet{Huang2024} consider five uncertainty metrics that map a query and the associated response generated by the LLM to a quantitative judgement of uncertainty or confidence --- namely, the $U_\textrm{Ecc}$, $U_\textrm{Deg}$, $U_\textrm{EigV}$, $U_\textrm{NLL}$, and $U_\textrm{SE}$ metrics, respectively --- and use $\RCE$ to quantify the alignment of the measure with one of four considered a posteriori correctness scores --- namely, BERT, METEOR, ROUGE-L, and ROUGE-1, respectively.\footnote{For detailed information on the uncertainty metrics and correctness scores, we refer to \citet{Huang2024} and references therein.}  We adopt the experimental setup and data of \citet{Huang2024} and evaluate the five measures relative to each of the four scores, based on responses generated by the Llama-2-7b and Llama-2-7b-chat \citep{Touvron2023} and GPT-3.5-turbo \citep{Ouyang2022} models on NQ-open \citep[][$n = 3,600$]{Lee2019}, SQuAD \citep[][$n = 10,570$]{Rajpurkar2016}, and TriviaQA \citep[][$n = 11,322$]{Joshi2017} queries, respectively.  The key tenet in \citet{Huang2024} is that higher confidence (respectively, lower uncertainty) ought to imply higher generation quality.  Given an uncertainty metric and a correctness score, the negatively oriented $\RCE$ measure (the lower, the better) quantifies deviations from this key assumption. 

\renewcommand{\arraystretch}{1.0}
\begin{table}[tp] 
\centering
\caption{Results in terms of $\CMA$ for experimental configurations in the format of Table 2 in \citet{Huang2024}, with temperature values set at .6 for Llama-2-7b and Llama-2-7b-chat and 1.0 for GPT-3.5, respectively.  For every value larger than .500 in the table, our one-sided test rejects the hypothesis of a $\CMA$ value less than or equal to $\frac{1}{2}$.  The highest $\CMA$ value in each row is shown in \textcolor{mycolor}{green} color.  The subscripts to these values indicate the result ($\CMA_+$: rejection; $\CMA_0$: no rejection) of a one-sided test of the hypothesis of equality with the second-highest value in the row.  All tests are at level .01.  \label{tab:LLM}}
\begin{tabular}{llllllll}
\toprule
Model & Queries & Correctness & $U_\textrm{Ecc}$ & $U_\textrm{Deg}$ & $U_\textrm{EigV}$ & $U_\textrm{NLL}$ & $U_\textrm{SE}$ \\
\midrule
\multirow{12}{*}{Llama-2-7b} & \multirow{4}{*}{NQ-open} & BERT & $.526$ & \color{mycolor}{$.653_+$} & $.650$ & $.545$ & $.565$ \\
& & METEOR & $.522$ & \color{mycolor}{$.633_+$} & $.626$ & $.521$ & $.565$ \\
& & ROUGE-L & $.530$ & \color{mycolor}{$.656_+$} & $.649$ & $.528$ & $.578$ \\
& & ROUGE-1 & $.530$ & \color{mycolor}{$.656_+$} & $.649$ & $.528$ & $.578$ \\
\cmidrule{3-8}
& \multirow{4}{*}{SQuAD} & BERT & $.510$ & $.624$ & \color{mycolor}{$.626_+$} & $.605$ & $.530$ \\
& & METEOR & \color{mycolor}{$.504_+$} & $.458$ & $.458$ & $.411$ & $.443$ \\
& & ROUGE-L & \color{mycolor}{$.497_+$} & $.453$ & $.452$ & $.426$ & $.455$ \\
& & ROUGE-1 & \color{mycolor}{$.497_+$} & $.452$ & $.452$ & $.426$ & $.455$ \\
\cmidrule{3-8}
& \multirow{4}{*}{TriviaQA} & BERT & $.531$ & \color{mycolor}{$.718_+$} & $.715$ & $.558$ & $.646$ \\
& & METEOR & $.515$ & $.638$ & $.631$ & $.529$ & \color{mycolor}{$.653_+$} \\
& & ROUGE-L & $.528$ & \color{mycolor}{$.700_+$} & $.694$ & $.547$ & $.672$ \\
& & ROUGE-1 & $.528$ & \color{mycolor}{$.700_+$} & $.694$ & $.546$ & $.672$ \\
\midrule
\multirow{12}{*}{Llama-2-7b-chat} & \multirow{4}{*}{NQ-open} & BERT & $.610$ & $.684$ & \color{mycolor}{$.686_+$} & $.576$ & $.673$ \\
& & METEOR & $.606$ & $.656$ & $.658$ & $.549$ & \color{mycolor}{$.672_+$} \\
& & ROUGE-L & $.620$ & $.677$ & $.680$ & $.554$ & \color{mycolor}{$.680_0$} \\
& & ROUGE-1 & $.620$ & $.677$ & $.680$ & $.554$ & \color{mycolor}{$.680_0$} \\
\cmidrule{3-8}
& \multirow{4}{*}{SQuAD} & BERT & $.547$ & $.578$ & $.584$ & \color{mycolor}{$.614_+$} & $.586$ \\
& & METEOR & $.493$ & $.433$ & $.437$ & \color{mycolor}{$.553_+$} & $.520$ \\
& & ROUGE-L & $.475$ & $.423$ & $.429$ & \color{mycolor}{$.559_+$} & $.513$ \\
& & ROUGE-1 & $.475$ & $.423$ & $.429$ & \color{mycolor}{$.558_+$} & $.513$ \\
\cmidrule{3-8}
& \multirow{4}{*}{TriviaQA} & BERT & $.719$ & $.769$ & $.767$ & $.767$ & \color{mycolor}{$.775_+$} \\
& & METEOR & $.670$ & $.706$ & $.704$ & $.671$ & \color{mycolor}{$.712_+$} \\
& & ROUGE-L & $.721$ & $.765$ & $.763$ & $.750$ & \color{mycolor}{$.777_+$} \\
& & ROUGE-1 & $.721$ & $.765$ & $.763$ & $.750$ & \color{mycolor}{$.777_+$} \\
\midrule
\multirow{12}{*}{GPT-3.5} & \multirow{4}{*}{NQ-open} & BERT & $.689$ & $.765$ & $.765$ & \color{mycolor}{$.775_+$} & $.761$ \\
& & METEOR & $.660$ & $.716$ & \color{mycolor}{$.717_0$} & $.683$ & $.707$ \\
& & ROUGE-L & $.689$ & \color{mycolor}{$.755_0$} & $.755$ & $.739$ & $.752$ \\
& & ROUGE-1 & $.689$ & \color{mycolor}{$.755_0$} & $.755$ & $.739$ & $.752$ \\
\cmidrule{3-8}
& \multirow{4}{*}{SQuAD} & BERT & $.521$ & $.531$ & $.533$ & \color{mycolor}{$.576_+$} & $.562$ \\
& & METEOR & $.544$ & $.567$ & $.567$ & \color{mycolor}{$.618_+$} & $.612$ \\
& & ROUGE-L & $.549$ & $.578$ & $.578$ & \color{mycolor}{$.650_+$} & $.640$ \\
& & ROUGE-1 & $.544$ & $.571$ & $.571$ & \color{mycolor}{$.640_+$} & $.629$ \\
\cmidrule{3-8}
& \multirow{4}{*}{TriviaQA} & BERT & $.682$ & $.705$ & $.704$ & \color{mycolor}{$.715_0$} & $.713$ \\
& & METEOR & $.662$ & \color{mycolor}{$.681_+$} & $.681$ & $.621$ & $.645$ \\
& & ROUGE-L & $.749$ & $.780$ & $.778$ & $.783$ & \color{mycolor}{$.791_+$} \\
& & ROUGE-1 & $.750$ & $.781$ & $.779$ & $.782$ & \color{mycolor}{$.790_+$} \\
\bottomrule
\end{tabular}
\end{table}

\begin{figure}[t]
\centering
\includegraphics[width = 0.800 \textwidth]{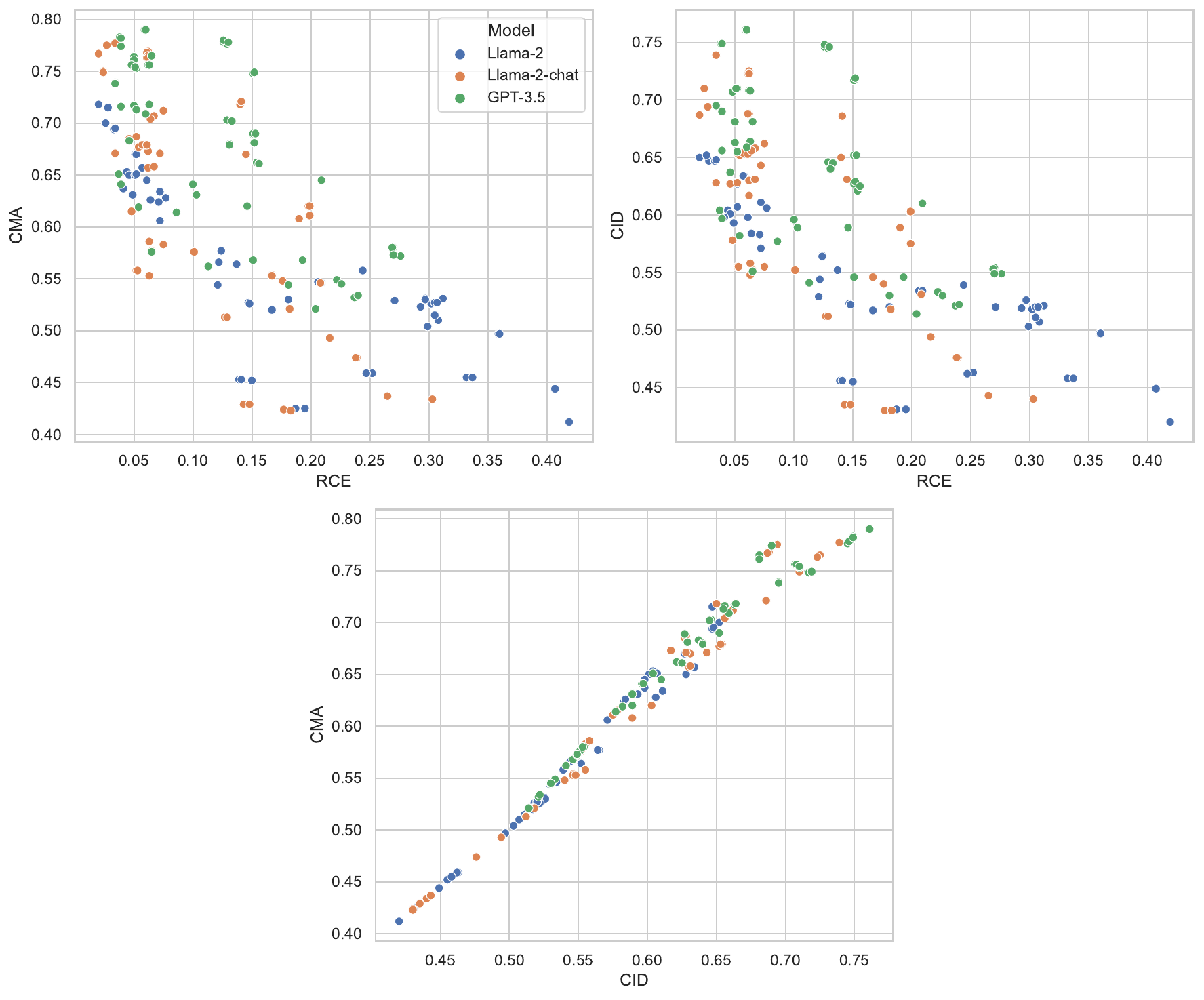}
\caption{Pairwise scatterplots of the positively oriented $\CMA$ and $\CID$ values, and the negatively oriented $\RCE$ measure, in the 180 experimental configurations from Table~\ref{tab:LLM}.  Following \citet{Huang2024} we display values averaged over 20 bootstrap samples.  \label{fig:LLM}}
\end{figure}

\citet{Huang2024} argue that $\RCE$ avoids thresholding of the correctness score and applies to any type of output range for the uncertainty metric.  The positively oriented $\CMA$ and $\CID$ measures share these desirable properties.  Furthermore, if the correctness values are dichotomous at the outset, $\CMA$ and $\CID$ reduce to the $\AUC$ measure at \eqref{eq:AUC}, which has been widely used in the LLM research community \citep{Kuhn2023, Lin2024}.

While $\RCE$ was developed specifically for the task at hand, we now compare to the general purpose measures $\CMA$ and $\CID$ that quantify monotone association, with a null value of $\frac{1}{2}$ under random assignments.  Specifically, Table \ref{tab:LLM} shows $\CMA(-X,Y)$ in 180 experimental configurations, where $X$ is one of the five uncertainty metrics and $Y$ is one of the four correctness scores, for each of the three language models and three query sources.  We note a certain robustness to the choice of the correctness score; in particular, the results under the ROUGE-L and the ROUGE-1 scores are very similar.  For the Llama-2-7b and Llama-2-7b-chat models and responses to SQuAD queries various $\CMA$ values are below $\frac{1}{2}$, whereas for NQ-open and TriviaQA queries the $\CMA$ value is above $\frac{1}{2}$ for all $3 \times 2 \times 5 \times 4 = 120$ configurations of LLM, query source, uncertainty metric, and correctness score, respectively.  The markedly different pattern observed for SQuAD with many $\CMA$ values near or even below $\frac{1}{2}$ may stem from fundamental differences in task structure compared to the open-domain question answering format of NQ-open and TriviaQA.  The reading comprehension format of SQuAD creates uncertainty not only about factual content, but also related to response exactitude.  We encourage the future exploration of task-dependent performance patterns to inform evaluation practices in the LLM research community.

Based on the tests developed in Section~\ref{sec:sample} statistical significance can be assessed.  Specifically, we test the hypothesis that $\CMA$ equals the null value of $\frac{1}{2}$ based on the test statistic at \eqref{eq:simple}, and we consider pairwise comparisons based on the test statistic at \eqref{eq:pairwise}.  In tests of $H_0 : \CMA \leq \frac{1}{2}$ at level $.01$ every single $\CMA$ value larger than .500 in Table \ref{tab:LLM} entails rejection.  The results of one-sided tests of equal $\CMA$ between the uncertainty metrics with the highest and the second-highest $\CMA$ value are documented in the table.  The respective table of $\CID$ values (not shown) features essentially the same results in terms of these tests.

The scatterplots in Figure \ref{fig:LLM} cover $\CMA$ and $\CID$ along with $\RCE$, where we follow the computation of the $\RCE$ values in \citet{Huang2024} and average over 20 bootstrap samples.  The all-purpose measures $\CMA$ and $\CID$ quantify the correctness values' ability to discriminate between queries of varying difficulty, which relates loosely only to the facets of predictive performance addressed by $\RCE$.  While $\CMA$ and $\CID$ share desirable properties of $\RCE$, they seem easier to use and interpret, due to the natural threshold of $\frac{1}{2}$ that corresponds to randomly assigned judgments of uncertainty, and the reduction to $\AUC$ when the correctness values are dichotomous in the first place.  Generally, $\CMA$ attains values higher than $\CID$, well in line with the properties of Spearman's Rho and Kendall's Tau in continuous settings \citep{Fredricks2007}.

\subsection{WeatherBench 2: Data-driven vs physics-based weather prediction}  \label{sec:WeatherBench}

The past few years have witnessed dramatic change in the practice of weather prediction.  While until a few years ago, weather forecasts had relied on physics-based numerical weather prediction (NWP) models, recent advances in neural network methodologies have enabled the rise of purely data-driven, artificial intelligence (AI) based weather prediction \citep[AIWP;][]{Bi2023, Lam2023} models.  Arguably, by now AIWP models are outperforming NWP models \citep{BenBouallegue2024, Gneiting2025}.  However, in contrast to temperature, pressure, and wind speed, forecasts of precipitation accumulations have received scant attention only in the comparison of physics-based and data-driven weather forecasts, with the notable exception of \citet{Radford2025}.  The reasons for the neglect of precipitation forecasts are two-fold.  On the one hand, researchers typically rely on the ERA5 product \citep{Hersbach2020} for ground truth data, and the quality thereof can be poor for precipitation, particularly towards polar regions, where precipitation subsumes rainfall, hail, and snow \citep{Lavers2022}.  Furthermore, precipitation is a mixed discrete-continuous variable with a point mass at zero (when there is no precipitation) and a skewed and heavy right tail, which challenges the application of standard evaluation measures.  However, the $\CMA$ and $\CID$ measures adapt naturally to the mixed discrete-continuous character of precipitation accumulations.  

In this light, we now consider the WeatherBench 2 benchmark dataset \citep{Rasp2024} and use the $\CMA$ and $\CID$ measures to compare NWP and AIWP forecasts of 24-hour precipitation accumulation at a prediction horizon of a day ahead.  We restrict the analysis to the premier NWP model, namely, the European Centre for Medium Range Weather Forecasts (ECMWF) high-resolution (HRES) model, and a leading AIWP model, namely the GraphCast model \citep{Lam2023} in its operational version,\footnote{These models are denoted IFS HRES and GraphCast (oper.)~in the WeatherBench 2 dataset \citep{Rasp2024}, which is available at \url{https://sites.research.google/gr/weatherbench/}.} which are run with the same initial conditions, thus allowing for a fair comparison.  The spatial resolution is 1.5 degrees, so each latitude band comprises 240 grid cells, for which forecasts were generated twice daily (in retrospect, for 2020) with initialization times 12 hours apart.  For reference, we compare the HRES and GraphCast models to the static WeatherBench 2 climatology forecast, which does not invoke numerical, statistical, or machine learning methods, and to the simplistic persistence forecast, which projects the most recent observed precipitation forward.  The aforementioned ERA5 product \citep{Hersbach2020} serves to provide ground truth.

We follow up on the analysis in Section~5.3 of \citet{Gneiting2022b} for the earlier WeatherBench 1 dataset \citep{Rasp2020}, where NWP forecasts outperformed AIWP forecasts by huge margins.  As in the earlier study, we use a range of performance metrics, which includes $\CMA$ skill and $\CID$ skill in the form of the percentage improvement for the forecast at hand over the climatology forecast.  Furthermore, we consider root mean squared error (RMSE) skill relative to the climatology forecast, stable equitable error in probability space \citep[SEEPS;][]{Rodwell2010} skill relative to the climatology forecast, and the anomaly correlation coefficient (ACC) in the form described by \citet[][Section~4.3.2]{Rasp2024}.  All five measures are positively oriented --- the larger, the better --- and by definition their value equals zero for the climatology forecast.\footnote{We average measures over latitude bands and then compute skill.  In passing, we note that the ideas behind the representation of $\AGC$ and $\CMA$ in \eqref{eq:AGC}, \eqref{eq:AGC_granularity}, and \eqref{eq:CMA2s}, respectively, resemble those put forth by meteorologists in the context of forecast evaluation for precipitation forecasts, where the linear error in probability space (LEPS) and SEEPS measures have been proposed \citep{Rodwell2010}.  However, unlike SEEPS, $\AGC$ and $\CMA$ do not require researchers to artificially discretize their data.  We also recall that RMSE is a measure of actual predictive performance, whereas ACC, $\AGC$, and $\CMA$ can be interpreted as measures of discrimination ability and, therefore, potential predictive ability.}

\begin{figure}[t]
\centering
\includegraphics[width = 0.95 \textwidth]{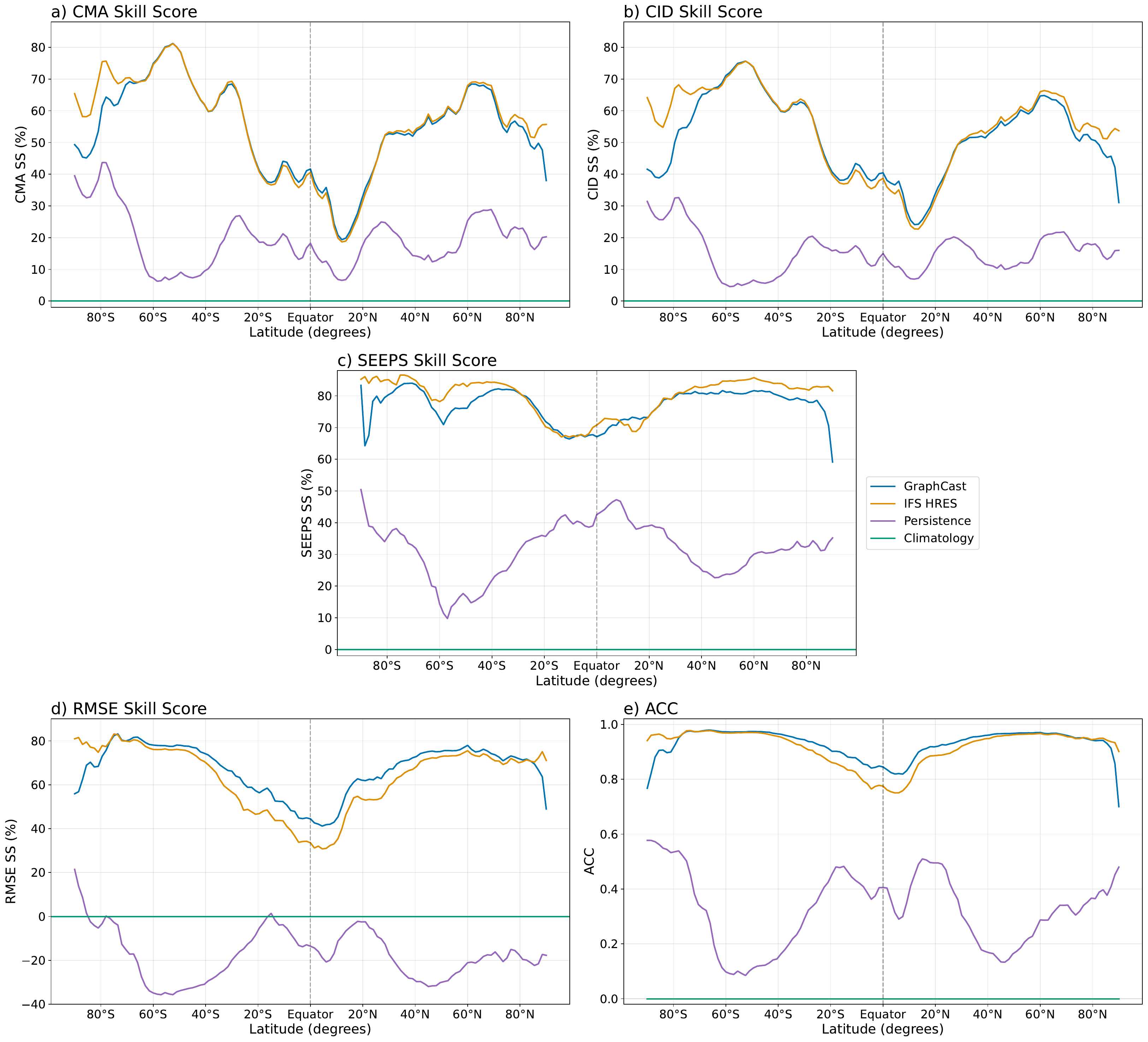}
\caption{Predictive ability of WeatherBench 2 HRES, GraphCast, and persistence forecasts of 24-hour precipitation accumulation relative to the climatology forecast in terms of a) $\CMA$ skill, b) $\CID$ skill, c) RMSE skill, d) ACC, and e) SEEPS skill.  The performance measures are averaged over the 240 grid cells in each latitude band prior to computing skill.  \label{fig:WeatherBench2}}
\end{figure}

Figure \ref{fig:WeatherBench2} shows the performance measures in their dependence on latitude bands at a resolution of 1.5 degrees.  Each latitude band has 240 grid cells with forecasts twice daily in the 2020 evaluation period, for a total of 732 forecasts per grid cell.  We compute the performance measures grid cell by grid cell and then average across the 240 grid cells in the latitude band.  The climatology forecast is unable to capture day-to-day variability but tends to have smaller errors than the persistence forecast, which extrapolates the precipitation patterns at the grid cell at hand, but magnifies magnitude errors when weather systems move or intensify swiftly.  Therefore, climatology outperforms the persistence forecast in terms of RMSE, and vice versa in terms of ACC, SEEPS, $\CMA$, and $\CID$.  The GraphCast and HRES forecasts outperform the reference forecasts by huge margins.  Interestingly, the data-driven GraphCast forecast has a competitive edge over the physics-based HRES forecast in the tropics near the equator, and vice versa towards the poles.  However, details depend on the performance measure used, and the differences between the GraphCast and HRES forecasts pale relative to the differences to the reference forecasts.

\begin{figure}[t]
\centering
\includegraphics[width = 1.000 \textwidth]{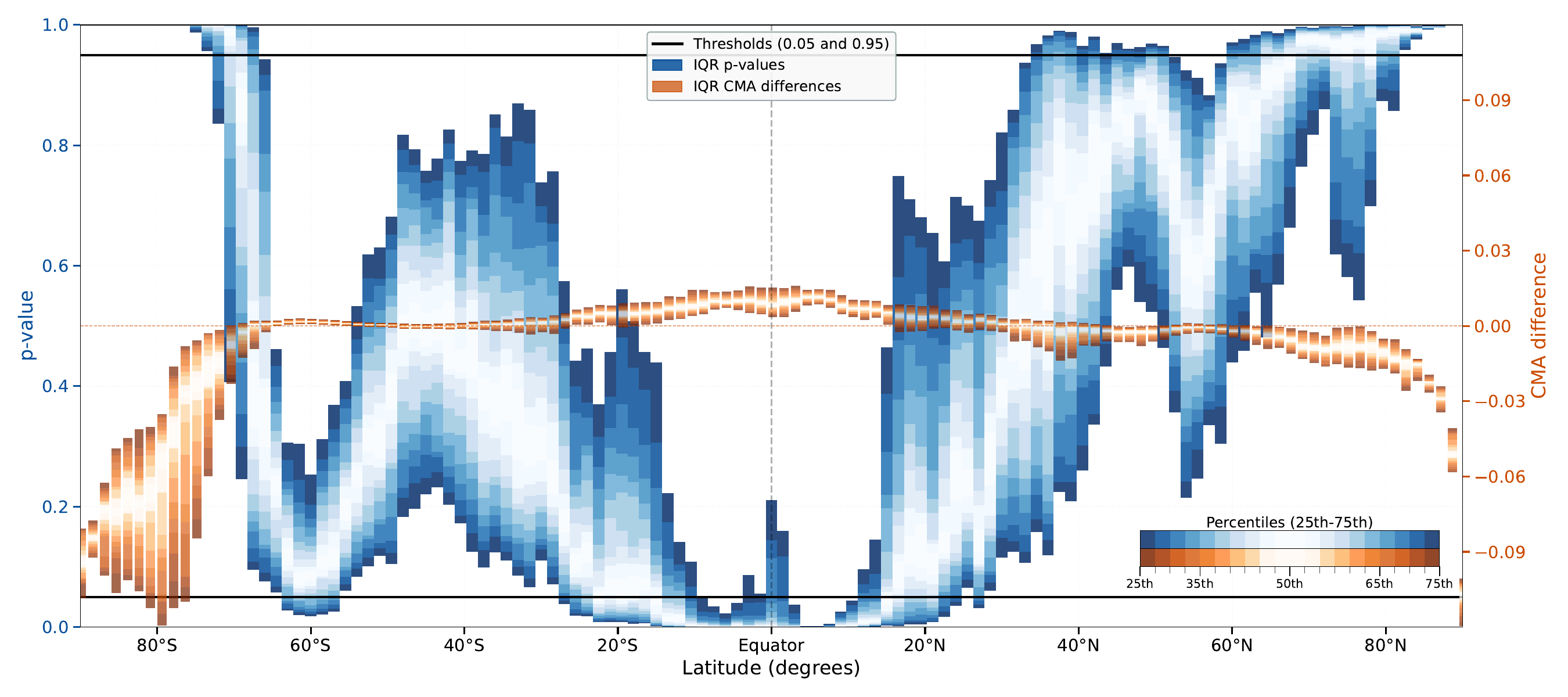}
\caption{Comparison of WeatherBench 2 GraphCast and HRES forecasts of 24-hour precipitation accumulation by latitude bands.  The boxplots show the interquartile range (IQR) of the difference in $\CMA$ in shades of red, and the associated one-sided $p$-value for the test statistic at \eqref{eq:pairwise} in shades of blue, considering the 240 grid cells in each latitude band, respectively.  Positive $\CMA$ differences and small $p$-values indicate superiority of the GraphCast model over the HRES model, and vice versa.  \label{fig:WeatherBench_pvals}}
\end{figure}

In this light, Figure \ref{fig:WeatherBench_pvals} provides a more nuanced comparison of the predictive performance of the WeatherBench 2 GraphCast and HRES forecasts in terms of $\CMA$.  As noted, each latitude band comprises 240 grid cells, and for each latitude band we show a boxplot of the respective differences in $\CMA$, with positive values indicating superiority of the GraphCast model.  WeatherBench 2 comprises two daily runs with initialization times 12 hours apart, so successive forecasts and outcomes for 24-hour precipitation accumulation must be assumed to be dependent.  Therefore, we use HAC estimates in the denominator of the test statistic at \eqref{eq:pairwise}, with implementation details as proposed by \citet[Section~5.3]{Pohle2025a}.  The boxplots for the respective one-sided $p$-values show that around the equator GraphCast has a competitive edge over HRES, whereas HRES outperforms GraphCast towards the poles.  The respective boxplots for $p$-values based on the $\CID$ measure look essentially identical and we omit them.  

These findings are striking in view of the emerging consensus that the latest data-driven weather models outperform physics-based models.  However, they are broadly compatible with a note in \citet[p.~869]{BenBouallegue2024}, who report that for pressure the otherwise competitive, data-driven Pangu-Weather model of \citet{Bi2023} performs much worse over the central Arctic than the HRES model.  One possible explanation is the use of cosine-latitude weighted scores when training the data-driven models \citep{BenBouallegue2024}.  A complementary or alternative explanation lies in a possible neglect of the tropics in the decade-long development of physics-based NWP models, with almost exclusive involvement of researchers residing at mid-latitude locations.  We find it tempting to speculate that, if further development of NWP models were to target precipitation in the tropics, while the training of AIWP models was to give more emphasis to subpolar and polar regions, then the predictive ability of AIWP models and NWP models would be roughly on par globally, and rather close to predictability limits, given current observational assets.

\section{Discussion}  \label{sec:discussion} 

Across scientific fields, researchers have sought measures of association that are invariant under strictly increasing transformations of the margins.  In this paper, we started from the observation that the literature on such measures has been splintered.  For continuous variables, symmetric rank correlation coefficients, such as Spearman's Rho and Kendall's Tau, have been studied in detail in the statistical literature.  For dichotomous outcomes, the asymmetric area under the curve ($\AUC$) measure has been used to assess monotone dependence.  Our population level discussion illuminates the distinction between symmetric and asymmetric measures.  Symmetric measures are subject to the attainability problem, and as one leaves the realm of continuous variables, they fail to satisfy the perfect monotone predictor property.  As scientific communities move to ever more automated analyses, it seems useful to work with measures that adapt naturally to the discrete, continuous, or mixed discrete-continuous character of the variables involved.  In this light, we join \citet{Chatterjee2021}, \citet{Azadkia2025}, and \citet{Roudaki2026} in arguing that, in much of statistical practice and data analysis, a reorientation from symmetric to asymmetric measures of dependence might be warranted.  

As the attainability problem is linked to discrete components, an appealing approach is to consider asymmetric measures of monotone association that cover all types of variables, but specialize to Spearman's Rho or Kendall's Tau in the continuous case.  Furthermore, the asymmetric measure ought to reduce to the ubiquitously used $\AUC$ measure if the outcome is dichotomous.  We have studied asymmetric measures of monotone association that satisfy these requirements and accomplish unified ways of quantifying monotone dependence for dichotomous, ranked categorical, mixed discrete-continuous, and continuous types of outcomes. 

The asymmetric Kendall correlation ($\AKC$) or Somers' $D$ at \eqref{eq:AKC} and the associated C index or concordance index ($\CID$) at \eqref{eq:CID} and \eqref{eq:CID_AKC} bridge Kendall's Tau and $\AUC$.  Following the landmark paper by \citet{Harrell1996}, the $\CID$ measure has been vastly popular in survival analysis, where censored outcomes are ubiquitous.  We emphasize that recent critiques of the $\CID$ measure \citep{Blanche2019, Hartman2023, Lillelund2026} apply to its use specifically in the setting of survival analysis, where researchers assess predictive distributions relative to censored outcomes, but not to the usages discussed here, where one assesses covariate, feature, or predictor variables relative to fully observed outcomes.  In fact, the emphasis on, and the intricacies of, censored outcomes may have hindered appreciation and dissemination of the $\AKC$ and $\CID$ measures in other fields, where censored outcomes may not be an issue at all.

The key original contribution of this paper lies in the introduction and analysis of asymmetric grade correlation ($\AGC$) at \eqref{eq:AGC} and the coefficient of monotone association ($\CMA$) measure at \eqref{eq:CMA}.  These measures bridge Spearman's Rho and $\AUC$ in the very same way that $\AKC$ and $\CID$ bridge Kendall's Tau and $\AUC$.  Faced with a choice between $\AGC$ and $\CMA$, and $\AKC$ and $\CID$, we might voice a slight preference for $\AGC$ and $\CMA$, due to the lesser dependence on the discretization level of the outcomes.

The associated large sample theory applies to all these measures and allows for the development of standard tools for asymptotic inference.  In this way, our results unify and complete thus far disconnected strands of literature, by establishing common population level theory, common estimators, and common tests that bridge continuous and dichotomous settings and apply to all linearly ordered outcomes.  In particular, our generalization of the \citet{DeLong1988} test for pairwise comparisons of $\AUC$ values for a dichotomous outcome accomplishes pairwise comparisons under all types of linearly ordered outcomes.  Software for our estimators and tests is available in Python \citep{Python}, and we refer the reader to \url{https://github.com/evwalz/paper_monotone_dependence} for code and reproduction materials.

Our plea for the use of asymmetric measures of monotone association seems complete with population level theory, large sample theory, tools for inference, case studies, and code --- with the development being original in the case of $\AGC$ and $\CMA$, and expository in the case of $\AKC$ and $\CID$.  Nonetheless, our paper leaves considerable scope for follow-up work.  On the infrastructural side, we intend to develop our code into software packages for both Python and \textsf{R} \citep{R}.  Tending to theoretical and methodological tasks, measures of monotone association are invariant under strictly increasing transformations of the margins, but their values may improve under non-strictly increasing mappings of the covariate or predictor.  For dichotomous outcomes, it is well-known that the pool-adjacent-violators (PAV) algorithm yields non-strictly increasing transformations that optimize $\AUC$ \citep{Fawcett2007}, and we are keen to learn about similar results for our measures and more general types of targets.  The limit distributions in the central limit theorems for the empirical versions of the measures are nondegenerate in typical practice.  However, degeneracies may occur, as analyzed in the case of Kendall's Tau by \citet[Section~2.4]{Dengler2010} and in the case of $\AUC$ by \citet{Demler2012} and \citet{Demler2017}, who study the effects of nested models and adjustments for estimated parameters.  We also encourage the development of large sample theory and associated tools for statistical inference in the general setting at \eqref{eq:data_MV_generalized}, where issues of degeneracy amplify.  Finally, a formidable challenge lies in the development of measures of monotone association and associated methodology for inference when the predictor variables and/or outcomes are multivariate, perhaps along the lines of recent work by \citet{Hallin2021}.

\section*{Acknowledgements}

We have used Cursor to aid code development.  We thank Marc-Oliver Pohle for numerous discussions and generously sharing a preliminary version of \citet{Pohle2025a}, which supplied the technical tools in the proofs of the asymptotic results in our Section~\ref{sec:sample}.  We have furthermore benefited from discussions with Olga Demler, Marc Genton, Alexander Jordan, Florian Kalinke, Peter Knippertz, Kristof Kraus, Sebastian Lerch, Marc Strickert, Jan St{\"u}hmer, and Lu Yang, and from comments by anonymous reviewers of an earlier version of the manuscript.  Eva-Maria Walz and Tilmann Gneiting are grateful for support by the Klaus Tschira Foundation.

\appendix 

\renewcommand\theequation{\thesection.\arabic{equation}}
\renewcommand\thefigure{\thesection.\arabic{figure}}

\section{Proofs for Section~\ref{sec:population}}  \label{app:proofs_population}

\setcounter{equation}{0} 
\setcounter{figure}{0} 

\begin{proof}[Proof of Theorem \ref{thm:AKC_properties}]
The claims in parts (a), (b), (c), (d), and (f) are immediate from the relations at~\eqref{eq:tau}, \eqref{eq:AKC}, and \eqref{eq:AKC_granularity}.  We proceed to prove part (e).  First, suppose that $Y = m(X)$ almost surely, where $m$ is nondecreasing.  Then $Y' < Y''$ implies $X' < X''$ almost surely, and we conclude that 
\begin{align*}
\AKC(X,Y) = 2 \CID(X,Y) - 1 = 2 \, \myE(s(X',X'') \mid Y' < Y'') - 1 = 1.
\end{align*}
Conversely, suppose that $\AKC(X,Y) = \CID(X,Y) = 1$.  From \eqref{eq:CID}, $Y' < Y''$ implies $X' < X''$ almost surely, whereas $Y' > Y''$ implies $X' > X''$ almost surely.  Therefore, $X' = X''$ implies $Y' = Y''$ almost surely, so $Y = m(X)$ almost surely for some function $m$.  As $X' < X''$ together with $Y' > Y''$ occurs on a null set only, $m$ can be chosen to be nondecreasing.
\end{proof}

\begin{proof}[Proof of Theorem \ref{thm:CMA2s}.]
The definition of the $\AGC$ measure at \eqref{eq:AGC} implies that
\begin{align}  \label{eq:proof1}
\CMA(X,Y) 
= \frac{1}{2} \left( \frac{\cov(\Fbar(X),\Gbar(Y))}{\cov(\Gbar(Y),\Gbar(Y))} + 1 \right)
= \frac{\myE \! \left( \Fbar(X) \Gbar(Y) \right) + \myE \! \left( (\Gbar(Y))^2 \right) - \frac{1}{2}}{2 \hsp \myE \! \left( (\Gbar(Y))^2 \right) - \frac{1}{2}}.
\end{align}
To establish the equivalence of the representations in \eqref{eq:CMA2s} and \eqref{eq:proof1}, respectively, we consider enumerator and denominator separately.  For the enumerator, 
\begin{align*}
\myE \! & \left( (\Gbar(Y'') - \Gbar(Y')) \hsp \one \{Y' < Y''\} \hsp s(X',X'') \right) \\
& = \myE \! \left( (\Gbar(Y'') - \Gbar(Y')) \hsp s(Y',Y'') \hsp s(X',X'') \right) \\
& = \myE \! \left( \Gbar(Y') \hsp s(Y'',Y') \hsp s(X'',X') \right) - \myE \! \left( \Gbar(Y') \hsp s(Y',Y'') \hsp s(X',X'') \right) \\
& = \myE \! \left( \Fbar(X) \hsp \Gbar(Y) \right) + \myE \! \left( \Gbar(Y)^2 \right) - \textstyle \frac{1}{2},
\end{align*}
where the final equality follows upon conditioning on $(X',Y')$ and simplifying.  For the denominator,
\begin{align*}
\myE \! \left( (\Gbar(Y'') - \Gbar(Y')) \hsp \one\{Y' < Y''\} \right) 
= \myE \! \left( \Gbar(Y'') \hsp G(Y''-) \right) - \myE \! \left( \Gbar(Y') \hsp (1-G(Y')) \right) = 2 \, \myE \! \left( \Gbar(Y))^2 \right) - \textstyle \frac{1}{2},
\end{align*}
where $G(y-) = \lim_{z \uparrow y} G(z)$, with the first equality following upon conditioning on $Y''$ and $Y'$, respectively, and the second equality using the fact that $G(y-) + G(y) = 2 \, \Gbar(y)$ for $y \in \real$.
\end{proof}

\begin{proof}[Proof of Theorem \ref{thm:CMA2AUC}.]
Since $G$ is continuous, $\Gbar = G$, and we conclude from \eqref{eq:CMA2s} and \eqref{eq:AUC_alpha} in concert with Fubini's theorem that 
\begin{align*}
\CMA(X,Y) 
& = \frac{\myE \left[ (G(Y'') - G(Y')) \hsp \one\{Y' < Y''\} \hsp s(X',X'') \right]}{\myE \left[ (G(Y'') - G(Y')) \hsp \one\{Y' < Y''\} \right]} \\
& = \frac{\int \myE \left( s(X',X'') \hsp \one\{Y' < Y''\} \hsp \one\{Y' < y \leq Y''\} \right) \dint G(y)}{\int  \myE \left( \one\{Y' < Y''\} \hsp \one\{Y' < y \leq Y''\} \right) \dint G(y)} \\
& = \frac{\int \myE \left( s(X',X'') \hsp \one\{Y' < y\} \hsp \one\{Y'' \geq y\} \right) \dint G(y)}{\int \myE \left( \one\{Y' < y\} \hsp \one\{Y'' \geq y\} \right) \dint G(y)} \\
& = \frac{\int_0^1 \myE \left( s(X', X'') \hsp \one\{Y' < G^{-1}(\alpha)\} \hsp \one\{Y'' \geq G^{-1}(\alpha)\} \right) \dint \alpha}{\int_0^1 \myE \left( \one\{Y' < G^{-1}(\alpha)\} \hsp \one\{Y'' \geq G^{-1}(\alpha)\} \right) \dint \alpha} \\
& = \frac{\int_0^1 \myE \left( \one\{Y' < G^{-1}(\alpha), Y'' \geq G^{-1}(\alpha)\} \right) \myE \left( s(X', X'') \mid Y' < G^{-1}(\alpha), Y'' \geq G^{-1}(\alpha) \right) \dint \alpha}{\int_0^1 \alpha \hsp (1-\alpha) \dint \alpha} \\
& = 6 \int_0^1 \alpha \hsp (1-\alpha) \AUC^{\hsp (\alpha)}(X,Y) \dint \alpha,
\end{align*}
where the fourth equality follows upon substituting $\alpha = G(y)$.
\end{proof}

\begin{proof}[Proof of Corollary \ref{cor:CMA2AUC}.]
Immediate from Theorem \ref{thm:CMA2AUC} and \eqref{eq:CMA_continuous}.
\end{proof}

\begin{proof}[Details for Example \ref{ex:CMA2AUC}]
The joint distribution $\myP$ of $(X,Y)$ is uniform on the triangle $\textrm{T}$, as illustrated in Figure \ref{fig:T}a).  As $\myP$ and the marginal distributions $F = {\cal L}(X)$ and $G = {\cal L}(Y)$ are continuous, we may write
\begin{align*}
\AUC^{\hsp (\alpha)}(X,Y) = \myP(X_{0,\alpha} < X_{1,\alpha}),
\end{align*}
where $X_{0,\alpha}$ and $X_{1,\alpha}$ are independent with distribution equal to ${\cal L}(X \mid Y < G^{-1}(\alpha))$ and ${\cal L}(X \mid Y \geq G^{-1}(\alpha))$, respectively.

\begin{figure}[t]
\centering
\includegraphics[width = 1.000 \textwidth]{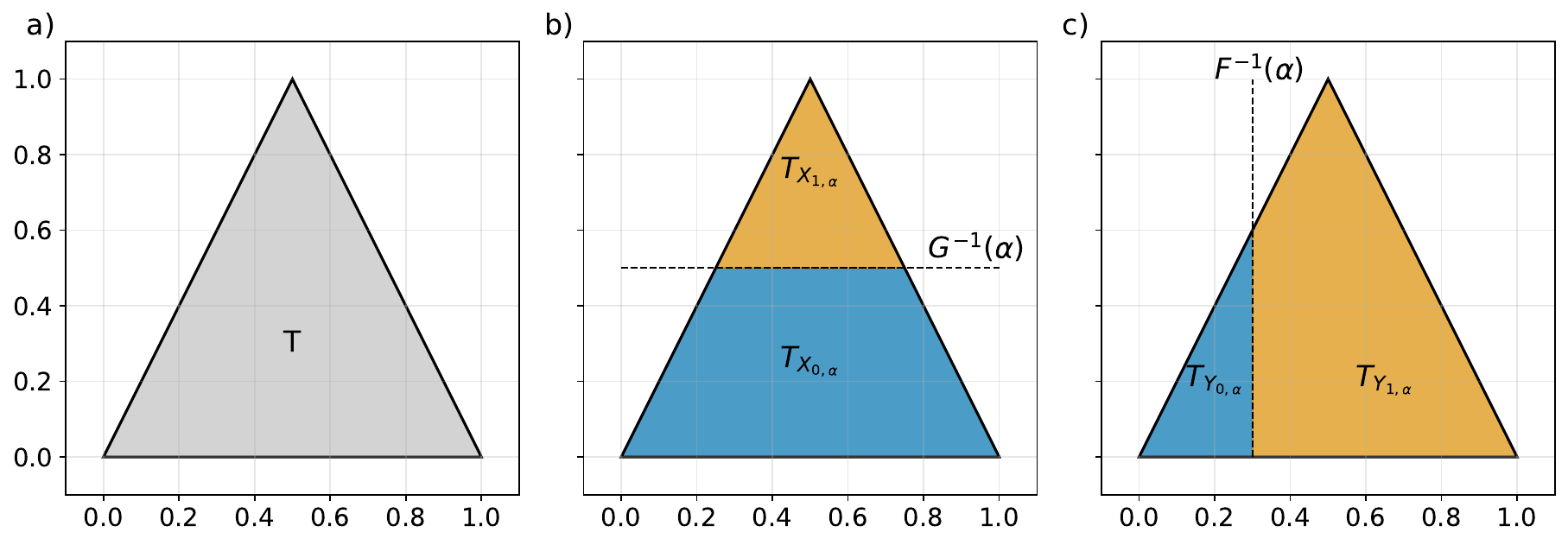}
\caption{The random vector $(X,Y)$ in Example \ref{ex:CMA2AUC} is uniformly distributed on a) the triangle $\textrm{T}$.  The b) horizontal and c) vertical divisions support the associated computations in this appendix.  \label{fig:T}}
\end{figure}

The quantity $\AUC^{\hsp (\alpha)}(X,Y)$ is straightforward to find for reasons of symmetry.  From Figure \ref{fig:T}b), the distributions of $X_{0,\alpha}$ and $X_{1,\alpha}$ are  symmetric about $\frac{1}{2}$, so
\begin{equation*}
\myP(X_{0,\alpha} < X_{1,\alpha}) = \myP(1 - X_{0,\alpha} < 1 - X_{1,\alpha}) = \myP(X_{0,\alpha} > X_{1,\alpha}).
\end{equation*}
Therefore, $\AUC^{\hsp (\alpha)}(X,Y) = \myP(X_{0,\alpha} < X_{1,\alpha}) = \frac{1}{2}$, as claimed.

Similarly, we may write
\begin{align*}
\AUC^{\hsp (\alpha)}(Y,X) = \myP(Y_{0,\alpha} < Y_{1,\alpha}),
\end{align*}
where $Y_{0,\alpha}$ and $Y_{1,\alpha}$ are independent with distribution equal to ${\cal L}(Y \mid X < F^{-1}(\alpha))$ and ${\cal L}(Y \mid X \geq F^{-1}(\alpha))$, respectively (Figure \ref{fig:T}c).  It suffices to consider $\alpha \in (0, \frac{1}{2})$, since
\begin{align}  \label{eq:AUC_Y_X_1}
\AUC^{\hsp (1 - \alpha)}(Y,X) = \myP(Y_{0,1-\alpha} < Y_{1,1-\alpha}) = 1 - \myP(Y_{0,\alpha} < Y_{1,\alpha}) = 1 - \AUC^{\hsp (\alpha)}(Y,X)
\end{align}
for reasons of symmetry.  If we let $x_\alpha = F^{-1}(\alpha)$, the densities of $Y_{0,\alpha}$ and $Y_{1,\alpha}$ are given by
\begin{align*}
g_{0,\alpha}(y) = \frac{1}{x_\alpha^2} \left( x_\alpha - \frac{y}{2} \right) \one \{0 \leq y \leq 2x_\alpha \}
\end{align*}
and 
\begin{align*}
g_{1,\alpha}(y) = \frac{1}{1 - 2x_\alpha^2} \left[ \left( 1 - x_\alpha - \frac{y}{2} \right) \one \{ 0 \leq y \leq 2x_\alpha \} + \left( 1 - y \right) \one \{ 2x_\alpha \leq y \leq 1 \} \right], 
\end{align*}
respectively.  Therefore,
\begin{align*}
\AUC^{\hsp (\alpha)}(Y, X) 
& = \myP(Y_{0,\alpha} < Y_{1,\alpha}) \\
& = \int_0^1 \int_0^1 \one \{ y_0 < y_1 \} \, g_{0,\alpha}(y_0) \, g_{1,\alpha}(y_1) \dint y_0 \dint y_1 \\
& = \int_0^{2x_\alpha} \int_0^{y_1} \frac{1}{x_\alpha^2} \left( x_\alpha - \frac{y_0}{2} \right) \dint y_0 \, \frac{2}{1 - 2x_\alpha^2} \left( 1 - x_\alpha - \frac{y_1}{2} \right) \dint y_1 + \int_{2x_\alpha}^1 \frac{2}{1 - 2x_\alpha^2} \left( 1 - y_1 \right) \dint y_1 \\
& = \frac{2}{1 - 2x_\alpha^2} \left[ \frac{1}{x_\alpha^2} \int_0^{2x_\alpha} \left( x_\alpha \hsp y_1 - \frac{y_1^2}{4} \right) \left( 1 - x_\alpha - \frac{y_1}{2} \right) \dint y_1 + \int_{2x_\alpha}^1 \left( 1 - y_1 \right) \dint y_1 \right] \\
& = \frac{2}{1 - 2x_\alpha^2} \left[ \frac{1}{x_\alpha^2} \int_0^{2x_\alpha} \left( \left( x_\alpha - x_\alpha^2 \right) y_1 - \frac{1 + x_\alpha}{4} y_1^2 + \frac{1}{8} y_1^3 \right) \dint y_1 + (1 - 2x_\alpha) - \frac{1 - 4x_\alpha^2}{2} \right] \\
& = 1 - \frac{\frac{4}{3} x_\alpha + \frac{1}{3} x_\alpha^2}{1 - 2x_\alpha^2}.
\end{align*}
To find $x_\alpha = F^{-1}(\alpha)$ for $\alpha \in (0,\frac{1}{2})$, we note that the density and CDF of $X$ on $[0,\frac{1}{2}]$ are given by
$f(x) = 4 \hsp x$ and $F(x) = 2 \hsp x^2$, respectively.  Hence, $x_\alpha = F^{-1}(\alpha) = \sqrt{\alpha}/\sqrt{2}$ and
\begin{align}  \label{eq:AUC_Y_X_2}
\AUC^{\hsp (\alpha)}(Y, X) = 1 - \frac{\frac{2}{3} \sqrt{2\alpha} + \frac{\alpha}{6}}{1 - \alpha}
\end{align}
for $\alpha \in (0, \frac{1}{2})$, as claimed.  For reasons of symmetry, $\AUC^{(1/2)}(Y,X) = \frac{1}{2}$, whereas $\AUC^{\hsp (\alpha)}(Y, X)$ for $\alpha \in (\frac{1}{2}, 1)$ is obtained from \eqref{eq:AUC_Y_X_1} and \eqref{eq:AUC_Y_X_2}.
\end{proof}

\begin{proof}[Proof of Theorem \ref{thm:AGC_properties}.]
The claims in parts (a), (b), (c), (d), and (f) are immediate from the relations at \eqref{eq:AGC}, \eqref{eq:AGC_granularity}, \eqref{eq:CMA},  and~\eqref{eq:CMA2s}.  We proceed to prove part (e).  First, suppose that $Y = m(X)$ almost surely, where $m$ is nondecreasing.  Then $Y' < Y''$ implies $X' < X''$ almost surely, and we find from~\eqref{eq:CMA2s} that $\AGC(X,Y) = 2 \, \CMA(X, Y) - 1 = 1$.  Conversely, if $\AGC(X,Y) = 1$ arguments as in the proof of Theorem~\ref{thm:AKC_properties}, but now based on the representation~\eqref{eq:CMA2s}, imply that $Y = m(X)$ almost surely, where $m$ is nondecreasing.
\end{proof}

\section{Comparison with Chatterjee correlation}  \label{app:Chatterjee}

\setcounter{equation}{0} 
\setcounter{figure}{0} 

In the nondegenerate population setting, the Chatterjee correlation coefficient $\xi$ of $Y$ on $X$ \citep{Chatterjee2021} is defined as
\begin{align*}
\xi(X,Y) 
= \frac{\int \var(\myE( \one \{ Y \leq y \} \mid X)) \dint G(y)}{\int \var( \one \{ Y \leq y \}) \dint G(y)}.
\end{align*}
Chatterjee's coefficient attains values in the unit interval $[0,1]$, and it holds that $\xi(X,Y) = 0$ if, and only if, $X$ and $Y$ are independent.  Furthermore, Chatterjee's coefficient has the complete predictor property: $\xi(X,Y) = 1$ if, and only if, $Y$ is completely dependent on $X$ in the sense of \citet{Lancaster1963}, i.e., if, and only if, there is a measurable function $g$ such that $Y = g(X)$ almost surely.  

Like the notion of perfect monotone dependence \citep{Kimeldorf1978}, complete dependence fails to be symmetric.  Therefore, reflecting arguments in analogy to those in Remark \ref{re:incompatibility}, Chatterjee's coefficient is asymmetric, with \citet[p.~2010]{Chatterjee2021} noting that 
\begin{quote} 
\small
``this is intentional.  We would like to keep it that way because we may want to understand if $Y$ is a function of $X$, and not just if one of the variables is a function of the other.''  
\end{quote}
However, while $\AGC$ and $\CMA$, and $\AKC$ and $\CID$, are measures of monotone dependence, Chatterjee's coefficient is a measure of complete dependence.  A further difference is that $\AGC$ and $\CMA$, and $\AKC$ and $\CID$, respectively, become symmetric when $X$ and $Y$ have the same 3-granularity and 2-granularity, respectively.  In contrast, it is possible that $\xi(X,Y) \not= \xi(Y,X)$ even when both $X$ and $Y$ are continuous. 

\begin{example}  \label{ex:Chatterjee}
We revisit Example \ref{ex:CMA2AUC} and let $(X,Y)$ be uniform on the triangle with vertices $(0,0)$, $(\frac{1}{2},1)$, and $(1,0)$ in the Euclidean plane.  Then $\xi(X,Y) = \frac{1}{24}$, whereas $\xi(Y,X) = \frac{1}{6}$.
\end{example}

\begin{proof}
The conditional distribution of $Y$ given $X = x$, where $x \in [0,1]$, is uniform on $[0, 1 - 2|x-\frac{1}{2}|]$.  Therefore,
\begin{align*}
\xi(X,Y) 
& = 6 \int_0^1 \var \left( \hsp \myE( \one \{ Y \leq G^{-1}(\alpha) \} \mid X) \right) \dint\alpha \\
& =  6 \int_0^1 \int_0^1 \myP(Y \leq G^{-1}(\alpha) \mid X = x)^2 \, f(x) \dint x \dint \alpha - 2 \\
& = \frac{1}{6},
\end{align*}
where $f(x) = 4x$ for $x \leq \frac{1}{2}$ and $f(x) = 4(1-x)$ for $x > \frac{1}{2}$, and
\begin{align*}
\myP \left( Y \leq G^{-1}(\alpha) \mid X = x \right) = \begin{cases}
\frac{1 - \sqrt{1 - \alpha}}{1 - 2|x - \frac{1}{2}|}, & |x - \frac{1}{2}| < \frac{\sqrt{1 - \alpha}}{2}, \\
1,                                                    & |x - \frac{1}{2}| \geq \frac{\sqrt{1 - \alpha}}{2}.
\end{cases}
\end{align*}
Similarly, the conditional distribution of $X$ given $Y = y$, where $y \in [0,1]$, is uniform on the interval $[\frac{y}{2}, 1 - \frac{y}{2}]$, so
\begin{align*}
\xi(Y,X) 
& = 6 \int_0^1 \var \left( \hsp \myE( \one \{ X \leq F^{-1}(\alpha) \} \mid Y) \right) \dint\alpha \\
& = 6 \int_0^1 \left( \myE \left( \myP \left( X \leq F^{-1}(\alpha) \mid Y \right)^2 \right) - \alpha^2 \right) \dint\alpha \\
& = 6 \int_0^1 \int_0^1 \myP \left( X \leq F^{-1}(\alpha) \mid Y = y \right)^2 \, 2 \hsp (1-y) \dint y \dint\alpha \hsp - \hsp 2 \\
\end{align*}
\begin{align*}
& = 6 \int_0^{1/2} \left( \int_0^{2\sqrt{\alpha/2}} \left( \frac{\sqrt{\alpha/2}-y/2}{1-y} \right)^2 2 \hsp (1-y) \dint y  +\int_{2(1-\sqrt{\alpha/2})}^1 2 \hsp (1-y) \dint y \right) \dint\alpha \\
& \hspace{10mm} + \; 6 \int_{1/2}^1 \int_0^{2\sqrt{(1-\alpha)/2}} \left( \frac{(1-\sqrt{(1-\alpha)/2}) - y/2}{1-y} \right)^2 2 \hsp (1-y) \dint y \dint\alpha \hsp - \hsp 2 \\
& = 6 \left( \frac{37}{288} + \frac{61}{288} \right) - 2 \\
& = \frac{1}{24},
\end{align*}
where we use that
\begin{align*}
\myP \left( X \leq F^{-1}(\alpha) \mid Y = y \right) =
\begin{cases} 
\frac{F^{-1}(\alpha) - y/2}{1 - y},      & 0 \leq y < 2 m_\alpha, \\
\one \{ F^{-1}(\alpha) > \frac{1}{2} \}, & 2 m_\alpha \leq y < 2 M_\alpha, \\
\one \{ F^{-1}(\alpha) < \frac{1}{2} \}, & 2 M_\alpha \leq y \leq 1,
\end{cases} 
\end{align*}
with $m_{\alpha} = \min \{ F^{-1}(\alpha), 1 - F^{-1}(\alpha) \}$, $M_{\alpha} = \max \{ F^{-1}(\alpha), 1-F^{-1}(\alpha) \}$, and $F^{-1}(\alpha) = \sqrt{\alpha/2} \, \one \{ \alpha \leq \frac{1}{2} \} \allowbreak + (1 - \sqrt{(1-\alpha)/2}) \hsp \one \{ \alpha > \frac{1}{2} \}$, respectively.
\end{proof}

Finally, we observe that when $Y_c$ is continuous then $\AGC(X,Y_c)$, $\CMA(X,Y_c)$, and $\xi(X,Y_c)$ admit representations as mixtures of quantities that depend on the law of $(X, \one \{ Y_c \leq G^{-1}(\alpha) \})$ only.  In the case of $\AGC$ and $\CMA$, the representation \eqref{eq:CMA2AUC} holds, where $\AUC^{\hsp (\alpha)}(X,Y_c)$ depends on $(X,Y_c)$ via the law of $(X, \one \{ Y_c \leq G^{-1}(\alpha) \})$ only and vanishes if $X$ and $Y_c$ are independent.  In the case of Chatterjee's correlation, we find that
\begin{align*}
\xi(X,Y_c) = 6 \int_0^1 \var \left( \hsp \myE( \one \{ Y_c \leq G^{-1}(\alpha) \} \mid X) \right) \dint\alpha,
\end{align*}
where the integrand also depends on the law of $(X, \one \{ Y_c \leq G^{-1}(\alpha) \})$ only and vanishes if $X$ and $Y_c$ are independent.  The same statement applies to the closely related, recently proposed integrated $R^2$ measure of \citet{Azadkia2025}.  However, we are unaware of a representation of this kind for $\AKC(X,Y_c)$ and $\CID(X,Y_c)$, respectively.

\section{Proofs for Section~\ref{sec:sample}}  \label{app:proofs_sample}

\setcounter{equation}{0} 
\setcounter{figure}{0} 

\begin{proof}[Proof of Proposition \ref{prop:AKC_empirical}.]
The expression for $\AKC_n$ at \eqref{eq:AKC_n} follows readily from the defining relations at \eqref{eq:AKC} and \eqref{eq:tau}, respectively.  To derive the expression for $\CID_n$ at \eqref{eq:CID2s_n} from \eqref{eq:AKC_n}, we note that 
\begin{align*}
\CID_n = \frac{1}{2} \left( \AKC_n + 1 \right) 
& = \frac{\sum_{i=1}^n \sum_{j=1}^n \left( \one \{ X_i < X_j \} - \one \{ X_i > X_j \} + 1 \right) \one \{ Y_i < Y_j \}}{2 \sum_{i=1}^n \sum_{j=1}^n \one \{ Y_i < Y_j \}} \\
& = \frac{\sum_{i=1}^n \sum_{j=1}^n \left( 2 \cdot \one \{ X_i < X_j \} + \one \{ X_i = X_j \} \right) \one \{ Y_i < Y_j \}}{2 \sum_{i=1}^n \sum_{j=1}^n \one \{ Y_i < Y_j \}} \\
& = \frac{\sum_{i=1}^n \sum_{j=1}^n s(X_i, X_j) \one \{ Y_i < Y_j \}}{\sum_{i=1}^n \sum_{j=1}^n \one \{ Y_i < Y_j \}}, 
\end{align*}
as claimed.
\end{proof}

\begin{proof}[Proof of Proposition \ref{prop:AGC_empirical}.]
For data of the form at \eqref{eq:data}, let $\Fbar_n$ and $\Gbar_n$ denote the empirical MDF of the sample values $X_1, \ldots, X_n$ and $Y_1, \ldots, Y_n$, respectively.  The mid rank $\bar{R}_j$ associated with $Y_j$ equals $\bar{R}_j = \allowbreak \sum_{i=1}^n \one \{ Y_i < Y_j \} \allowbreak + \allowbreak \frac{1}{2} \sum_{i=1}^n \one \{ Y_i = Y_j \} + \frac{1}{2}$.  Therefore, 
\begin{align*}
\Gbar_n(Y_i) = \frac{1}{n} \left( \bar{R}_i - \frac{1}{2} \right), \qquad 
\myE \hsp \left[ \Gbar_n(Y_i) \right] = \frac{1}{2}
\end{align*}
for $i = 1, \ldots, n$, and analogously for $\Fbar_n(X_i) = \frac{1}{n} \hsp \bar{Q}_i$.  Consequently, \eqref{eq:AGC_n} and \eqref{eq:CMA_n} follow from \eqref{eq:AGC} and \eqref{eq:CMA} by plugging in and simplifying.  Similarly, \eqref{eq:CMA2s_n} follows from \eqref{eq:CMA2s} by plugging in and simplifying.  The transition to \eqref{eq:CMA2s_alt_n} then is straightforward.
\end{proof}

\begin{proof}[Proof of Corollary \ref{cor:CPA}.]
If $Y_1, \ldots, Y_n$ are pairwise distinct, we find from \eqref{eq:CMA2s_alt_n} and \eqref{eq:CPA_n} that $\CPA_n = \CMA_n$.  The second equality then follows readily from Theorem 1 in concert with Definition 3 in \citet{Gneiting2022b}.
\end{proof}

Our proofs of the central limit laws in Theorems \ref{thm:AKC_CLT}, \ref{thm:AGC_CLT}, \ref{thm:AKC_CLT_MV}, and \ref{thm:AGC_CLT_MV} rely heavily on classical asymptotic theory for U-statistics \citep{Hoeffding1948} as recently expanded by \citet{Pohle2025a}.  We operate under Scenario \ref{sce:2}, which nests Scenario \ref{sce:1}, and consider the sequence at \eqref{eq:data_MV}. 

Towards the proofs of Theorems \ref{thm:AKC_CLT} and \ref{thm:AKC_CLT_MV}, we recall that 
\begin{align}  \label{eq:AKC_proof}
\AKC^{(k)} = \AKC(X^{(k)},Y) = \frac{\tau^{(k)}}{1 - \twogran_G}
\end{align}
for the population quantities, where $\tau^{(k)}$ is defined at \eqref{eq:tau_k}, and analogously $\AKC^{(k)}_n = \tau^{(k)}_n / (1 - \twogran_n)$ for the empirical quantities, where $k = 1, \ldots, m$.  Proposition A.3 and Lemma B.2 of \citet{Pohle2025a} yield the following limit law, where the entries of the asymptotic covariance matrix are expressed in terms of the kernels $K^{(\operatorname{Kendall}, \, k)}$ and $K^{(2)}$ at \eqref{eq:K_Kendall} and \eqref{eq:K_2}.

\begin{proposition}  \label{prop:AKC_MV}
Under Scenario \ref{sce:2} it holds that
\begin{align*}  
\sqrt{n} \left(
\begin{pmatrix} \tau_n^{(1)} \\ \vdots \\ \tau_n^{(m)} \\ \twogran_n \end{pmatrix} - \begin{pmatrix} \tau^{(1)} \\ \vdots \\ \tau^{(m)} \\ \twogran_G \end{pmatrix} \right) \stackrel{d}{\longrightarrow} \: 
\cN \left( \begin{pmatrix} 0 \\ \vdots \\ 0 \\ 0 \end{pmatrix} , \Sigma_\circ = \left( \Sigma_\circ^{(k,l)} \right)_{k,l=1}^{m+1} \right) ,
\end{align*}
where
\begin{align*}
\Sigma_\circ^{(k,l)} = 4 \, \myE \hspm \left( K^{(\operatorname{Kendall}, \, k)}(X^{(k)},Y) \, K^{(\operatorname{Kendall}, \, l)}(X^{(l)},Y) \right) 
\end{align*}
for\/ $k, l = 1, \ldots, m$,
\begin{align*}
\Sigma_\circ^{(m+1,k)} = \Sigma_\circ^{(k,m+1)} = 4 \, \myE \hspm \left( K^{(\operatorname{Kendall}, \, k)}(X^{(k)},Y) \, K^{(2)}(Y) \right)
\end{align*}
for\/ $k = 1, \ldots, m$, and 
\begin{align*}
\Sigma_\circ^{(m+1,m+1)} = 4 \, \myE \hspm \left( K^{(2)}(Y)^2 \right) ,
\end{align*}
respectively.
\end{proposition}

\begin{proof}[Proof of Theorem \ref{thm:AKC_CLT_MV}]
In view of the defining relations at \eqref{eq:AKC_proof}, the respective relationships for the empirical quantities, and Proposition \ref{prop:AKC_MV}, a routine application of the delta method yields
\begin{align*}
\sqrt{n} \left(
\begin{pmatrix} \AKC_n^{(1)} \\ \vdots \\ \AKC_n^{(m)} \end{pmatrix} - \begin{pmatrix} \AKC^{(1)} \\ \vdots \\ \AKC^{(m)} \end{pmatrix} \right) 
\stackrel{d}{\longrightarrow} \: \cN \! \left( \begin{pmatrix} 0 \\ \vdots \\ 0 \end{pmatrix} , \Sigma = \left( \Sigma^{(k,l)} \right)_{k,l=1}^m \right) ,
\end{align*}
where $\Sigma = \Lambda \Sigma_\circ \Lambda'$ with the matrix $\Sigma_\circ = \left( \Sigma_\circ^{(k,l)} \right)_{k,l=1}^{m+1}$ as specified in Proposition \ref{prop:AKC_MV} and
\begin{align*}
\Lambda = \begin{pmatrix}
		1 / (1 - \twogran_G) & 0 & \cdots & 0 & \tau^{(1)} / (1 - \twogran_G)^2 \\
		0 & 1 / (1 - \twogran_G) & \cdots & 0 & \tau^{(2)} / (1 - \twogran_G)^2 \rule{0mm}{5mm} \\
		\vdots & \vdots & \ddots & \vdots & \vdots \\
		0 & 0 & \cdots & 1 / (1 - \twogran_G) & \tau^{(m)} / (1 - \twogran_G)^2 \end{pmatrix}
\in \real^{m \times (m+1)}.
\end{align*}
By straightforward algebra, 
\begin{align*}
\Sigma^{(k,l)} 
= \frac{\Sigma_\circ^{(k,l)}}{(1 - \twogran_G)^2} + \frac{\tau^{(k)} \, \Sigma_\circ^{(l,m+1)} + \tau^{(l)} \, \Sigma_\circ^{(k,m+1)}}{(1 - \twogran_G)^3} + \frac{\tau^{(k)} \tau^{(l)} \, \Sigma_\circ^{(m+1,m+1)}}{(1 - \twogran_G)^4}
\end{align*}
for $k, l = 1, \ldots, m$, which yields the expression at \eqref{eq:AKC_sigma_kl}. 
\end{proof}

\begin{proof}[Proof of Theorem \ref{thm:AKC_CLT}.]
Evidently, the first part is a special case of Theorem \ref{thm:AKC_CLT_MV}.  If $X$ and $Y$ are independent then it holds that $\tau = 0$ and 
\begin{align*}
\myE \left( K^{(2)}(X,Y)^2 \right) = \frac{1}{9} \left( 1 - \threegran_F \right) \left( 1 - \threegran_G \right)
\end{align*}
by Proposition 4.2 and Corollary 4.5 of \citet{Pohle2025a}, so $\sigma^2$ simplifies to the expression at \eqref{eq:AKC_CLT_independent}.
\end{proof}

Tending now to Theorems \ref{thm:AGC_CLT} and \ref{thm:AGC_CLT_MV}, we note that  
\begin{align*}
\AGC^{(k)} 
= \AGC(X^{(k)},Y) 
= \frac{\rhoG^{\, (k)}}{1 - \threegran_G}
\end{align*}
for the population quantities and analogously $\AGC^{(k)}_n = \rhoG_n^{\, (k)} / (1 - \threegran_n)$ for the empirical quantities, where $k = 1, \ldots, m$.  The empirical quantities are asymptotically equivalent to the U-statistic in eq.~(12) of \citet{Pohle2025a}.  Therefore, Proposition A.3 and Lemma B.2 of \citet{Pohle2025a} yield the following result, where the entries of the asymptotic covariance matrix are expressed in terms of the kernels $K^{(\operatorname{Spearman}, \, k)}$ and $K^{(3)}$ at \eqref{eq:K_Spearman} and \eqref{eq:K_3}, respectively.

\begin{proposition}  \label{prop:AGC_MV}
Under Scenario \ref{sce:2} it holds that
\begin{align*}  
\sqrt{n} \left(
\begin{pmatrix} \rhoG_n^{\, (1)} \\ \vdots \\ \rhoG_n^{\, (m)} \\ \threegran_n \end{pmatrix} - \begin{pmatrix} \rhoG^{\, (1)} \\ \vdots \\ \rhoG^{\, (m)} \\ \threegran_G \end{pmatrix} 
\right) \stackrel{d}{\longrightarrow} \: \cN \left( \begin{pmatrix} 0 \\ \vdots \\ 0 \\ 0 \end{pmatrix} , \Sigma_\circ = \left( \Sigma_\circ^{(k,l)} \right)_{k,l=1}^{m+1} 
\right) ,
\end{align*}
where
\begin{align*}
\Sigma_\circ^{(k,l)} = 9 \, \myE \hspm \left( K^{(\operatorname{Spearman}, \, k)}(X^{(k)},Y) \, K^{(\operatorname{Spearman}, \, l)}(X^{(l)},Y) \right) 
\end{align*}
for\/ $k, l = 1, \ldots, m$,
\begin{align*}
\Sigma_\circ^{(m+1,k)} = \Sigma_\circ^{(k,m+1)} = 9 \, \myE \hspm \left( K^{(\operatorname{Spearman}, \, k)}(X^{(k)},Y) \, K^{(3)}(Y) \right)
\end{align*}
for\/ $k = 1, \ldots, m$, and 
\begin{align*}
\Sigma_\circ^{(m+1,m+1)} = 9 \, \myE \hspm \left( K^{(3)}(Y)^2 \right) ,
\end{align*}
respectively.
\end{proposition}

With Proposition \ref{prop:AGC_MV} at hand, the proofs of Theorems \ref{thm:AGC_CLT_MV} and \ref{thm:AGC_CLT}, respectively, are entirely analogous to the proofs of Theorems \ref{thm:AKC_CLT_MV} and \ref{thm:AKC_CLT}, respectively.

\bibliographystyle{apalike}
\bibliography{newbiblio}

\end{document}